\theoremstyle{plain}
\newtheorem{theorem}{Theorem}
\newtheorem*{theorem*}{Theorem}
\newtheorem*{corollary*}{Corollary}
\def\g{\gamma}
\def\O{\Omega}
\def\s{\sigma}
\def\l{\lambda}
\def\t{\theta}
\def\F{\mathcal{F}}
\def\pfi{\varphi}
\def\ds{\displaystyle}
\date{}
\begin{document}

\title{Synchronization of Huygens' clocks and the Poincar\'e method}

\author{Vojin Jovanovic\\
 Systems, Implementation \& Integration\\
 Smith Bits, A Schlumberger Co.\\
 1310 Rankin Road\\
 Houston, TX 77073\\
 e-mail: fractal97@gmail.com \and Sergiy Koshkin\\
 Computer and Mathematical Sciences\\
 University of Houston-Downtown\\
 One Main Street, \#S705\\
 Houston, TX 77002\\
 e-mail: koshkins@uhd.edu}
\maketitle
\begin{abstract}
We study two models of connected pendulum clocks synchronizing their oscillations, a phenomenon originally observed by Huygens. The oscillation angles are assumed to be small so that the pendulums are modeled by harmonic oscillators, clock escapements are modeled by the van der Pol terms. The mass ratio of the pendulum bobs to their casings is taken as a small parameter. Analytic conditions for existence and stability of synchronization regimes, and analytic expressions for their stable amplitudes and period corrections are derived using the Poincar\'e theorem on existence of periodic solutions in autonomous quasi-linear systems. The anti-phase regime always exists and is stable under variation of the system parameters. The in-phase regime may exist and be stable, exist and be unstable, or not exist at all depending on parameter values. As the damping in the frame connecting the clocks is increased the in-phase stable amplitude and period are decreasing until the regime first destabilizes and then disappears. The results are most complete for the traditional three degrees of freedom model, where the clock casings and the frame are consolidated into a single mass.

\textbf{Keywords}: Huygens' clocks, synchronization, autonomous quasi-linear systems, periodic solutions
\end{abstract}

\newpage

\section{Introduction}

In 1665 Huygens reported that two pendulum clocks attached to a common frame tended, after a while, to run in unison with their pendulums always pointing in mutually opposing directions. He eventually attributed this "sympathy" to imperceptible motions of the frame that transfer motion from one clock to the other. With some 
modifications, Huygens' experiments were repeated by Ellicott, Euler and Poisson, among others, who observed similar behavior, see descriptions in \cite{Kort}. Original attempts to give a quantitative description of this sympathy were based on linear theory. Korteweg in a paper of 1905 \cite{Kort} showed that the motion of the linearized system decomposes into three normal modes, two of which involve sizable oscillations of the frame. He then argued heuristically that damping in the frame siphons energy away from these two modes, eventually leaving only the mode responsible for the anti-phase synchrony. While an improvement over Huygens' intuitive explanation, Korteweg's theory left all non-linear effects of clock escapements and coupling out of his quantitative analysis. Nonetheless, upgraded variations of modal analysis appear in some recent approaches to the problem \cite{Ben, Sen}.

By the end of the 19th century coordinated behavior in systems of interacting self-excited oscillators was observed in many other acoustic, electric and mechanical systems, the phenomenon later came to be known as
synchronization. About the same time, motivated by applications to celestial mechanics, Poincar\'e adapted the method of small parameter to finding periodic solutions to non-linear systems. Somewhat later, the Poincar\'e method was applied to 
synchronization problems by Blekhman and others, see \cite{Bl2} and references therein, providing a first non-linear theory of the effect. Since the method requires equations of motion to be analytic in phase variables Blekhman used the simplest analytic model of a self-excited oscillator, the van der Pol model, to describe clock escapements. His analysis predicted not only the previously observed anti-phase synchronization, but also the in-phase synchronization, when the pendulums always point in the same direction. According to Blekhman, both regimes always co-exist in the same system being triggered by different initial conditions, he also reported observing both in experiments. 

In the last decade interest in Huygens' sympathy was renewed by the work of Bennett et al. \cite{Ben}, who took great care to reproduce Huygens' experiments and developed a new non-linear model to describe them. Their work builds on Korteweg's modal analysis, but equations of motion include terms modeling clock escapements with impulsive kicks. Unlike Blekhman's their model does not predict a stable in-phase synchronization, but it does predict asymptotic vanishing of oscillations in one of the clocks, a phenomenon they call beating death. This is a direct consequence of having a threshold angle for escapements to engage, if at some point the amplitude of oscillations falls below it energy loss forces oscillations to damp out. The authors even expressed puzzlement over Blekhman's results since neither they nor Huygens ever observed the in-phase synchronization. However, Czolczynski et al. \cite{Cz2} later confirmed co-existence of both regimes experimentally using modern clocks with non-impulsive escapements. Numerical simulations in 
\cite{Dil,Frad} also show co-existence for some parameter values. The common trait in all three cases is that the escapements involved do not have an engagement threshold.

We should also mention that the in-phase synchronization was observed by Pantaleone \cite{Pan} and others 
\cite{Cz1,Oud} after replacing pendulums with metronomes. Moreover, for small damping in the frame the 
in-phase synchronization was the only one observed, in contrast to Huygens' setup. The reason is that metronomes swing with amplitudes too large to be approximated by harmonic oscillators, changing the situation entirely. Existence of the  the in-phase synchronization for large amplitudes was proved in \cite{Kuz} using the van der Pol model for escapements.
 
Our main purpose in this paper is to apply the Poincar\'e method to the Huygens system and to compare predictions it makes to known observations. Our application of the Poincar\'e method builds on the original work of Blekhman 
\cite{Bl2}. As already mentioned, one obvious limitation of the method is that all terms in the equations have to be analytic, so realistic escapements with impulsive kicks can not be accomodated. However, van der Pol approximation appears to be rather accurate as long as the pendulum amplitudes stay above the engagement threshold, see \cite{Pan}. On the other hand, the method offers several important benefits. First, there is a natural small parameter in the system, the mass ratio of pendulum bobs to the frame, which incorporates heavy clock casings. Second, one can take into account not only escapements bit also the non-linear coupling term, which remains after pendulum equations are partially linearized for small angles. In \cite{Ben} and \cite{Dil} this term had to be dropped because both approaches rely on the system's evolution being linear between escapement kicks. However, its magnitude is comparable to the linear terms kept. Finally, the Poincar\'e method identifies synchronization regimes rigorously without heuristic pre-simplifications required for modal analysis, and gives analytic expressions for their stability ranges and amplitudes.

One of the method's key predictions, namely co-existence of the in-phase and the anti-phase regimes, appears to be at variance with some observations. We will show in Sections \ref{S2}, \ref{S3} that this prediction is not due to the method's limitations, but arises from treating damping in the frame as a small parameter. In Section \ref{S2} we assume both coupling of the pendulums to the frame and damping in the frame to be small. To avoid neglecting the second derivative in the frame's position, we transform the system into a first order form in a way different from that in 
\cite{Bl2}. However, our conclusions are the same in this setting: the in-phase and the anti-phase regimes co-exist for the same values of system parameters. 

On the other hand, once damping in the frame is treated as a regular parameter in Section \ref{S3} one sees that the two regimes are not created equal. While the anti-phase synchronization remains robustly stable under variation of its values, the in-phase regime is only stable when the damping is very small relative to coupling. As the damping gets larger, it destabilizes and then disappears altogether. Although the Poincare theorem, as any asymptotic result, does not prescribe what "small" is, simulations in Section \ref{S5} suggest that in Huygens' experiments values of damping were outside of the co-existence range. In other words, damping in the frame is a bifurcation parameter and experimental values for it can be both below or above the bifurcation values. To illustrate the point, numerical simulations in Section \ref{S5} are compared to experimental results in \cite{Ben,Cz2}.

As in \cite{Ben}, additional clues are provided by a preliminary linear analysis in Section \ref{S1} on the system obtained by disregarding all non-linear terms. Although synchronization is often believed to be a non-linear phenomenon, it turns out not to be entirely so. Even the linear system has the sum of pendulum angles converge to zero, which is an anti-phase synchronization of sorts. Of course, due to linearity stable amplitudes depend on initial conditions, it is their convergence to invariant values that is a non-linear effect. Also, in the linear model pendulums started with exactly equal phases have their oscillations asymptotically damp out. Although van der Pol terms alter this asymptotic outcome, numerical simulations show that the system may initially follow the linear pattern.

In addition to the traditional three degrees of freedom model, where the clock casings and the connecting frame are modeled by a single mass, we also consider a four degrees of freedom system, where each casing is represented by a separate mass (Section \ref{S4}). In the linear approximation this system reduces to just three degrees of freedom because the angle difference between the pendulums decouples from the rest of the variables. Similar system but with a piecewise-linear escapement term was considered in \cite{Dil}, where existence of a stable anti-phase synchronization is proved for some parameter range.

 \section{Equations of motion and linear synchronization}\label{S1}

We start with a general setup of $n$ identical pendulums ('clocks')
of mass $m$ and length $l$ attached to a rigid frame of mass $M$,
which is restricted to move horizontally. In its turn, the frame is attached to a fixed support by 
a weightless spring with stiffness $k$ and damping $c$, Figure \ref{fig:OneMassSys}.
Pendulum angles with the vertical are denoted by $\t_{i}$ and position
of the frame is given by $x$. Horizontal restriction reflects the original setup of Huygens, where clocks were hanging off a wooden beam resting on the backs of two chairs. Experiments show that behavior of the system changes substantially if non-horizontal motion is allowed, and synchronization may not occur at all \cite{Cz2}.
\begin{figure}[htbp]
\centering
\includegraphics[width=3.2in]{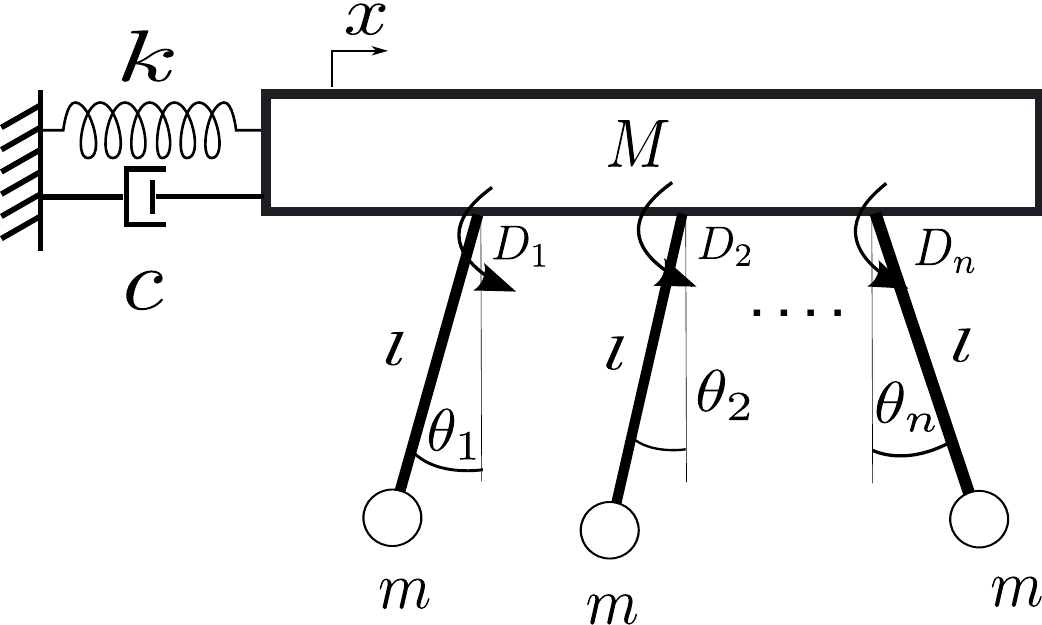}
\caption{\label{fig:OneMassSys}Rigid frame with pendulums attached to a fixed support.} 
\end{figure}
Equations of motion can be derived using the Lagrangian approach, see e.g. \cite{Cz1}, 
\begin{alignat*}{1}
\begin{cases}
ml^{2}\ddot{\theta}_{i}+mgl\sin(\theta_{i})+ml\ddot{x}\cos(\theta_{i})=D_{i},\hspace{3em} i=1,...,n\\
(M+nm)\ddot{x}+c\dot{x}+kx+ml\sum_{i=1}^{n}\ddot{\theta}_{i}\cos(\theta_{i})=ml\sum_{i=1}^{n}\dot{\theta}_{i}^{2}\sin(\theta_{i})
\end{cases}
\end{alignat*}
Here $D_{i}:=D(\t_{i},\dot{\t}_{i})$ denote driving torques provided
by the clock escapement mechanisms. The Poincar\'e method, which is
our method of choice, requires all terms to be analytic in their variables.
Following Blekhman \cite{Bl2}, we choose the simplest analytic term that admits a limit cycle,
the van der Pol term $D(\t,\dot{\t}):=e(\g^{2}-\t^{2})\dot{\t}$. Here $e$ measures 
the strength of the escapement and $\g$ is the critical angle, at 
which the escapement switches from boosting to damping.
This model does not always work at small amplitudes due to engagement threshold
and impulsive kicks in some escapements, but on average gives a good
approximation when the pendulums are close to their limit cycles \cite{Pan}.
We also assume that oscillation angles remain sufficiently small
so that the pendulums can be approximated by harmonic oscillators.
Taking $\sin\t_{i}\approx\t_{i}$ and $\cos\t_{i}\approx1$ we get
a partially linearized system 
\begin{alignat*}{1}
\begin{cases}
ml^{2}\ddot{\theta}_{i}+mgl\theta_{i}+ml\ddot{x}=D_{i},\hspace{3em} i=1,...,n\\
(M+nm)\ddot{x}+c\dot{x}+kx+ml\sum_{i=1}^{n}\ddot{\theta}_{i}=ml\sum_{i=1}^{n}\dot{\theta}_{i}^{2}\theta_{i}\,.
\end{cases}
\end{alignat*}
Rescaling the time as $\ds{\tau=\sqrt{\frac{g}{l}}\, t}$ and introducing
a dimensionless frame position variable $\ds{y=\frac{x}{l}}$ we get
a dimensionless system 
\begin{equation}\label{DimssSys}
\begin{cases}
\ddot{\theta}_{i}+\theta_{i}+\ddot{y}=F_{i},\hspace{3em} i=1,...,n \\
\ddot{y}+\sigma\dot{y}+\Omega^{2}y+\beta\sum_{i=1}^{n}\ddot{\theta}_{i}
=\beta\sum_{i=1}^{n}\dot{\theta}_{i}^{2}\theta_{i}\,,
\end{cases}
\end{equation}
where $\ds{F_{i}=\frac{D_{i}}{mgl}}$, $\ds{\s=\frac{c}{(M+nm)\sqrt{g/l}}}$,
$\ds{\O^{2}=\frac{kl}{(M+nm)g}}$ and $\ds{\beta=\frac{m}{M+nm}}$.
Note that the only remaining non-linearities are on the right hand
side of \eqref{DimssSys}, namely the van der Pol terms and the non-linear
coupling term $\ds{\beta\sum_{i=1}^{n}\dot{\t}_{i}^{2}\t_{i}}$. Before
proceeding with analysis of this system it will be instructive to
investigate a linear version obtained by simply dropping the right hand
sides in \eqref{DimssSys}. The resulting linear system 
\begin{equation}\label{LinSys}
\begin{cases}
\ddot{\theta}_{i}+\theta_{i}+\ddot{y}=0,\hspace{3em} i=1,...,n \\
\ddot{y}+\sigma\dot{y}+\Omega^{2}y+\beta\sum_{i=1}^{n}\ddot{\theta}_{i}=0
\end{cases}
\end{equation}
is used in \cite{Ben} to attempt a simple explanation of Huygens' observations.
Asymptotic behavior of this system can be analyzed using a Lyapunov
function. It turns out that the sum of pendulum angles asymptotically
vanishes. 
\begin{theorem}\label{Lyap} Suppose that all coefficients in \eqref{LinSys}
are non-negative and $\g>0$. Let $\t_{i}=\t_{i}(t)$ be solutions
to it with arbitrary initial conditions, then $\ds{\t_{1}+\dots+\t_{n}\xrightarrow[t\to\infty]{}0}$.
\end{theorem}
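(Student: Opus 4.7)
The strategy is to construct a quadratic Lyapunov function $V$ for \eqref{LinSys} whose dissipation rate is exactly $-\sigma\,\dot y^{2}$, then invoke LaSalle's invariance principle. The convergence of $S:=\theta_{1}+\cdots+\theta_{n}$ to zero will fall out of analyzing the flow-invariant subsets of $\{\dot y\equiv 0\}$; note that $\gamma$ does not appear in \eqref{LinSys}, so the hypothesis reduces to $\sigma,\Omega^{2},\beta\geq 0$ (and we tacitly use $\sigma>0$).

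To build $V$, I would multiply the $i$-th pendulum equation by $\dot\theta_{i}$ and sum, and also multiply the frame equation by $\dot y$, obtaining
\begin{align*}
\tfrac{d}{dt}\Bigl(\tfrac12\sum\dot\theta_{i}^{2}+\tfrac12\sum\theta_{i}^{2}\Bigr)+\ddot y\,\dot S &=0,\\
\tfrac{d}{dt}\Bigl(\tfrac12\dot y^{2}+\tfrac12\Omega^{2}y^{2}\Bigr)+\sigma\dot y^{2}+\beta\,\ddot S\,\dot y &=0.
\end{align*}
Adding $\beta$ times the first identity to the second, the two mixed terms combine via $\beta(\ddot y\,\dot S+\dot y\,\ddot S)=\beta\,\tfrac{d}{dt}(\dot y\,\dot S)$ into a total derivative, yielding
$$\frac{d}{dt}V=-\sigma\dot y^{2},\qquad V:=\tfrac{\beta}{2}\sum\dot\theta_{i}^{2}+\tfrac{\beta}{2}\sum\theta_{i}^{2}+\beta\,\dot y\,\dot S+\tfrac12\dot y^{2}+\tfrac12\Omega^{2}y^{2}.$$
The velocity quadratic form has Gram matrix $\bigl(\begin{smallmatrix}\beta I_{n} & \beta\mathbf 1\\ \beta\mathbf 1^{\top} & 1\end{smallmatrix}\bigr)$; its Schur complement is $1-n\beta=M/(M+nm)>0$, so $V$ is positive definite. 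Consequently $V$ is non-increasing, bounded below, and all trajectories of \eqref{LinSys} are bounded.

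LaSalle's principle then guarantees that every $\omega$-limit trajectory lies in the largest subset of $\{\dot y\equiv 0\}$ invariant under the flow. On such a subset $y\equiv y_{0}$ is constant and $\ddot y=0$; the frame equation forces $\beta\ddot S=-\Omega^{2}y_{0}$, a constant, while summing the pendulum equations with $\ddot y=0$ gives $\ddot S+S=0$. A constant $S$ satisfying $\ddot S+S=0$ can only be $S\equiv 0$, which also forces $y_{0}=0$. Since trajectories approach their $\omega$-limit set and $S$ is continuous, $S(t)\to 0$ as $t\to\infty$, as claimed.

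The main obstacle is the construction of $V$: the naive mechanical energy of \eqref{LinSys} is not Lyapunov because the inertial coupling $\beta\ddot S$ in the frame equation produces an uncontrolled cross term in the energy balance. The remedy is the integration-by-parts identity $\ddot y\,\dot S+\dot y\,\ddot S=\tfrac{d}{dt}(\dot y\,\dot S)$, which rebalances the two mixed terms into an exact derivative. The positivity of the resulting $V$ then hinges on $\beta<1/n$, which is automatic from the physical constraint $M>0$; without it the Lyapunov function would be indefinite and the argument would collapse.
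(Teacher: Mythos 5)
Your proof is correct and rests on the same two pillars as the paper's: a quadratic Lyapunov function with a velocity cross term whose derivative is a negative multiple of $\dot y^{2}$, followed by LaSalle's invariance principle. The one genuine difference is the decomposition: the paper first sums the pendulum equations to reduce to a two-degree-of-freedom system in $(\t,y)$ with $\t=\t_1+\dots+\t_n$, and builds its function $E=\beta(\t^{2}+\dot\t^{2})+n(\O^{2}y^{2}+\dot y^{2})+2\beta n\dot\t\dot y$ on that reduced system, whereas you build $V$ on the full $(2n+2)$-dimensional phase space, keeping $\sum\t_i^{2}$ and $\sum\dot\t_i^{2}$ rather than $S^{2}$ and $\dot S^{2}$. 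Since $S^{2}\le n\sum\t_i^{2}$, your $V$ is strictly finer: it buys boundedness of every individual pendulum trajectory (not just of the sum), at the modest cost of an $(n+1)\times(n+1)$ Schur-complement check in place of a $2\times2$ one; the invariant-set analysis inside $\{\dot y\equiv0\}$ is then identical in both versions. Your observation that $\g$ does not actually occur in \eqref{LinSys} and that the operative hypothesis is $\s>0$ is well taken --- the paper's computation $\dot E=-4n\g\dot y^{2}$ tacitly identifies $2\g$ with $\s$, which conflicts with $\g$ elsewhere denoting the van der Pol critical angle. One shared caveat: like the paper, you need $\O^{2}>0$ and $\beta>0$ strictly (not merely non-negative) for positive definiteness of the Lyapunov function and for the step $\ddot S=-\O^{2}y_{0}/\beta$; both hold for the physical parameter values, but it is worth stating.
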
 
\begin{proof} Set $\ds{\t:=\t_{1}+\dots+\t_{n}}$, then adding the first $n$ equations in \eqref{LinSys} we get 
\begin{equation}\label{SumSys}
\begin{cases}
\ddot{\theta}+\theta+n\ddot{y}=0\\
\ddot{y}+\sigma\dot{y}+\Omega^{2}y+\beta\ddot{\theta}=0\,.
\end{cases}
\end{equation}
Define the Lyapunov function by 
\begin{gather*}
E:=\beta(\t^{2}+\dot{\t}^{2})+n(\O^{2}y^{2}+\dot{y}^{2})+2\beta n\dot{\t}\dot{y}=\beta\t^{2}+(\beta-n\beta^{2})\dot{\t}^{2}+n\O^{2}y^{2}+n(\beta\dot{\t}+\dot{y})^{2}.
\end{gather*}
Then $E\geq0$ as long as $\beta-n\beta^{2}\geq0$ or $0\leq\beta\leq\frac{1}{n}$.
Given the definition of $\beta$ above this holds even with strict
inequalities. Moreover, if $0<\beta<\frac{1}{n}$ then $E=0$ if and
only if $\dot{\t}=\t=\dot{y}=y=0$, i.e. $E$ is positive definite.
On the other hand, by \eqref{SumSys} 
\begin{gather*}
\dot{E}:=2\beta\dot{\t}(\ddot{\t}+\t)+2n\dot{y}(\ddot{y}+\O^{2}y)+2\beta n(\ddot{\t}\dot{y}+\dot{\t}\ddot{y})=-2\beta n\dot{\t}\ddot{y}-2n(\beta\ddot{\t}\dot{y}+2\g\dot{y}^{2})+2\beta n(\ddot{\t}\dot{y}+\dot{\t}\ddot{y})=-4n\g\dot{y}^{2}.
\end{gather*}
By the LaSalle invariance principle any trajectory of \eqref{SumSys}
asymptotically converges to a trajectory contained in $\dot{E}^{-1}(0)$
since $\dot{E}\leq0$. But for $\g>0$ any latter trajectory must
have $\dot{y}=0$, hence $\ddot{y}=0$ and $y=\text{const}$.
The system then implies $\ddot{\t}=-\t=\text{const}$ and $\dot{\t}=\t=0$,
so any trajectory within $\dot{E}^{-1}(0)$ does not leave the origin. 
Therefore, $\ds{y\xrightarrow[t\to\infty]{}0}$ and $\ds{\t=\t_{1}+\dots+\t_{n}\xrightarrow[t\to\infty]{}0}$.
\end{proof} 
Now let us focus on the case of two pendulums that we are most interested in.
\begin{corollary*} Let $n=2$ in Theorem \ref{Lyap} and set $\delta(t):=\t_{1}(t)-\t_{2}(t)$. 
Then asymptotically\\ 
$\begin{cases}\t_{1}(t)\sim\frac12A\,\sin(t+\phi)\\\t_{2}(t)\sim\frac12A\,\sin(t+\phi+\pi)\end{cases}$
with $A=\sqrt{\delta(0)^2+\dot{\delta}(0)^2}$ and $\sin\phi=\delta(0)/A$.
\end{corollary*}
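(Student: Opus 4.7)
The plan is to reduce the corollary to an immediate combination of Theorem \ref{Lyap} with the fact that the difference $\delta=\theta_1-\theta_2$ decouples cleanly from the frame variable $y$ in the linearized system \eqref{LinSys}.

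First, I would subtract the second pendulum equation from the first in \eqref{LinSys} (with $n=2$). The $\ddot y$ terms cancel exactly, leaving the simple harmonic oscillator equation $\ddot\delta+\delta=0$. This is an exact identity, not an asymptotic one, so $\delta(t)$ is a pure sinusoid of angular frequency $1$ for all $t$.

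Next, I would put the general solution of $\ddot\delta+\delta=0$ into the amplitude-phase form. Writing $\delta(t)=A\sin(t+\phi)$ and expanding gives $\delta(t)=A\cos\phi\,\sin t+A\sin\phi\,\cos t$, which matches the initial data $\delta(0)=A\sin\phi$ and $\dot\delta(0)=A\cos\phi$ precisely when $A=\sqrt{\delta(0)^2+\dot\delta(0)^2}$ and $\sin\phi=\delta(0)/A$ (with $\cos\phi=\dot\delta(0)/A$ fixing the quadrant). So $\delta(t)=A\sin(t+\phi)$ identically.

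Finally, Theorem \ref{Lyap} applied with $n=2$ gives $\theta_1(t)+\theta_2(t)\to 0$ as $t\to\infty$. Combining with the identity $\theta_1=\tfrac12(\theta_1+\theta_2)+\tfrac12\delta$, one obtains $\theta_1(t)\sim\tfrac12\delta(t)=\tfrac12 A\sin(t+\phi)$. Symmetrically, $\theta_2=\tfrac12(\theta_1+\theta_2)-\tfrac12\delta$, and using $-\sin(t+\phi)=\sin(t+\phi+\pi)$ yields $\theta_2(t)\sim\tfrac12 A\sin(t+\phi+\pi)$, which is the claim. There is no real obstacle here; the only place to be careful is the decoupling step, since it relies on the pendulum equations in \eqref{LinSys} being identical (same natural frequency and same coefficient in front of $\ddot y$), a feature that would be lost if the van der Pol or non-linear coupling terms were retained.
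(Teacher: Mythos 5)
Your proof is correct and follows exactly the paper's argument: subtract the two pendulum equations so the $\ddot y$ terms cancel, solve $\ddot\delta+\delta=0$ in amplitude--phase form, and combine with Theorem \ref{Lyap} via $\theta_{1,2}=\tfrac12(\theta_1+\theta_2)\pm\tfrac12\delta$. The only difference is that you spell out the initial-condition matching for $A$ and $\phi$, which the paper leaves implicit.
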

\begin{proof}
Subtracting the equation for $\t_{2}$ in \eqref{LinSys} from the one for $\t_{1}$ we have $\ddot{\delta}+\delta=0$. Therefore, $\delta(t)=A\sin(t+\phi)$ with $A$ and $\phi$ as in the statement of the corollary. Since asymptotically
$\t_{1}(t)\sim-\t_{2}(t)$ by Theorem \ref{Lyap} it follows that $\t_{1}(t)\sim\frac12\delta(t)\sim-\t_{2}(t)$.
\end{proof} 
\begin{figure}[htbp]
\begin{centering}
(a)\includegraphics[scale=0.35]{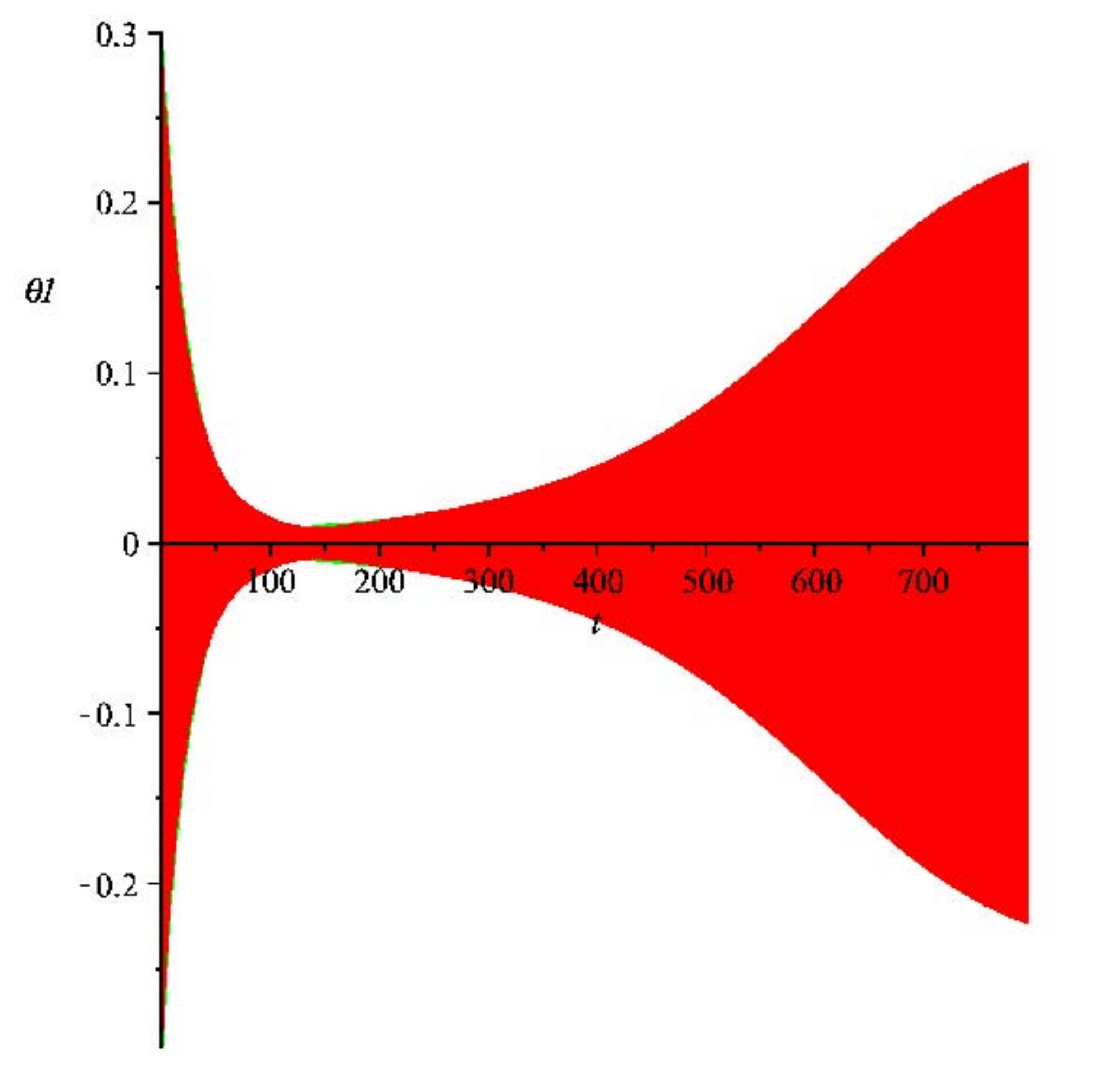}\quad(b)\includegraphics[scale=0.35]{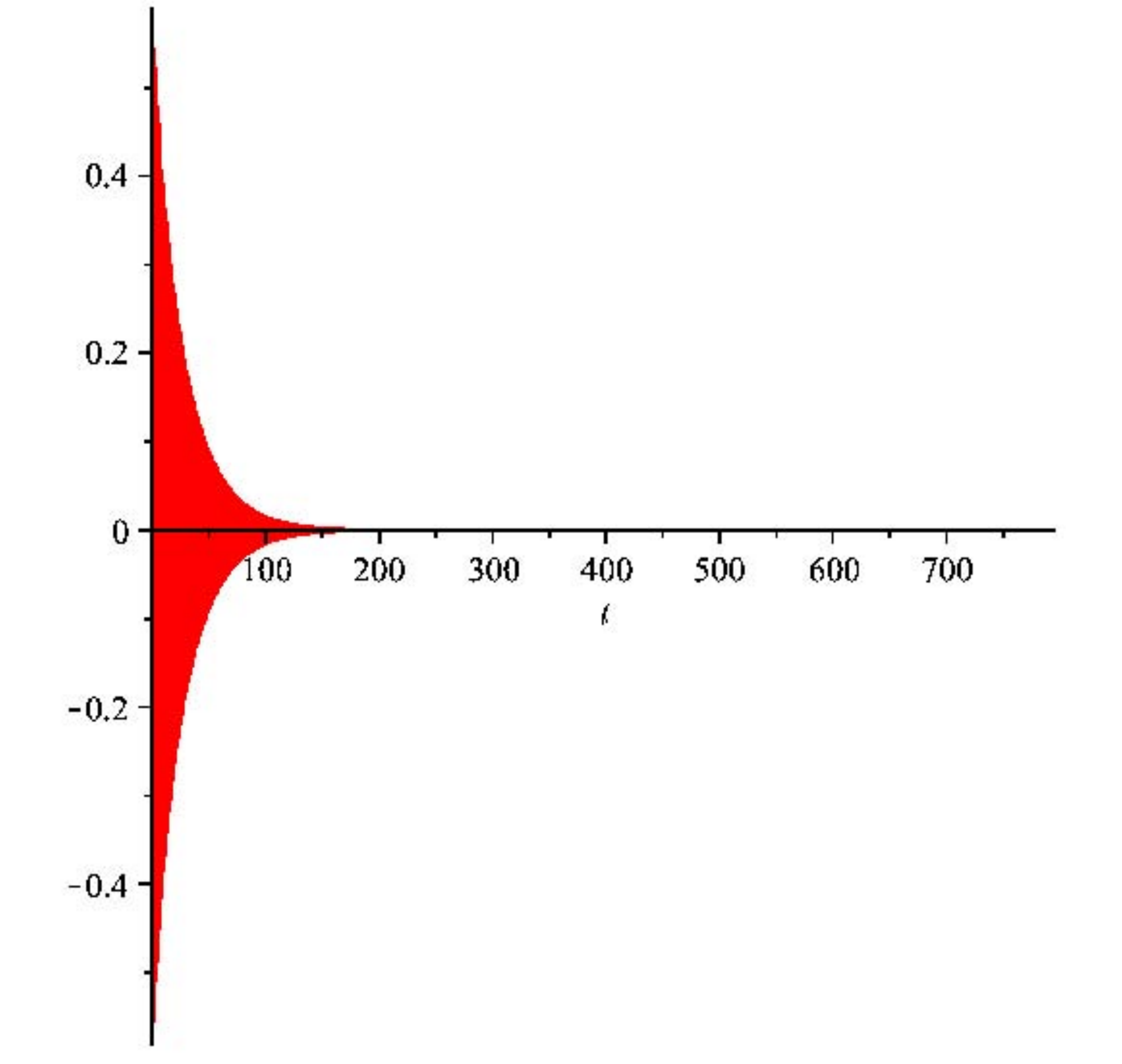}
\caption{\label{Bnwhale}Simulation of \eqref{DimssSys} with $\s=1.16$, $\beta=0.013$, $\O^2=0.0014$, $\g=0.122$ and 
$e/mgl=0.134$. Time is measured in the number of cycles of a decoupled harmonic oscillator with the same parameters.
(a) $\theta_1(t)$ and $\theta_2(t)$ (indistinguishable at this resolution) superimposed, (b) $\theta_1(t)+\theta_2(t)$. Linear pattern is followed for about $150$ cycles.}
\end{centering}
\end{figure}
Thus, pendulum angles asymptotically align in anti-phase for any initial values. The difference with true synchronization is that the stable amplitude $A$ depends on initial values. In particular, if the pendulums are started at exactly in-phase, $\delta(0)=\dot{\delta}(0)=0$, then $A=0$ and they will both asymptotically damp out. Even if initial phases are not exactly the same the asymptotic amplitude can be very small. Numerical simulations show that for large $\s$ and small coupling $\beta$ the non-linear system may follow this linear pattern for many cycles before the van der Pol terms break the trend, see Figure \ref{Bnwhale}. If an impulsive escapement is used instead, and the engagement threshold for it is above this asymptotic amplitude, one can expect that one or both pendulums will stop beating in finite time.

\section{Small damping in the frame}\label{S2}

Poincar\'e method relies on non-linear terms entering equations being multiplied
by a small parameter. In practice, this can be interpreted as some dimensionless
parameter of the system being 'sufficiently' small. In our case, it is natural to assume that $\ds{\beta=\frac{m}{M+nm}}$
is small since masses of the pendulum bobs $m$ are much smaller
than the consolidated mass $M$ of the clock casings and the frame.
We will also assume that the escapement strength $\ds{\varepsilon:=\frac{e}{mgl}}$
of the van der Pol term $F_{i}=\varepsilon(\g^{2}-\t_{i}^{2})\dot{\t}_{i}$
is small for the purposes of the Poincare theorem (which may in fact be large in practical terms). 
Eliminating $\ddot{y}$ from the pendulum equations in \eqref{DimssSys}
we get 
\begin{equation}\label{SecDerSys}
\begin{cases}
\ddot{\theta}_{i}+\theta_{i}-\sigma\dot{y}-\Omega^{2}y=F_{i}-\frac{\beta}{1-n\beta}F, \hspace{3em} i=1,...,n\,;\\
\ddot{y}+\sigma\dot{y}+\Omega^{2}y=\frac{\beta}{1-n\beta}F\,,
\end{cases}
\end{equation}
where 
\[
F:=-n(\sigma\dot{y}+\Omega^{2}y)+\sum_{i=1}^{n}\left(\theta_{i}+\dot{\theta}_{i}^{2}\theta_{i}-F_{i}\right)\,.
\]
This transformation is needed to reduce the system to a first order form.
In system \eqref{SecDerSys}
\[
\mu:=\frac{\beta}{1-n\beta}=\frac{\frac{m}{M+nm}}{1-\frac{nm}{M+nm}}=\frac{m}{M}
\]
is a convenient choice for the small parameter. Since there can
only be one small parameter we shall assume that $\varepsilon=\mu a$ and set $\F_{i}:=a(\g^{2}-\t_{i}^{2})\dot{\t}_{i}$,
so that $F_{i}=\mu\F_{i}$. The system transforms into 
\begin{equation}\label{FiSys}
\begin{cases}
\ddot{\theta}_{i}+\theta_{i}-\sigma\dot{y}-\Omega^{2}y=\mu\left(\F_{i}-F\right), \hspace{3em} i=1,...,n\,;\\
\ddot{y}+\sigma\dot{y}+\Omega^{2}y=\mu F\,,
\end{cases}
\end{equation}
and its Poincar\'e analysis for $n=2$ will be our main goal in the next section.

In this section however, we wish to consider a simplified case when
$\s$ is also considered small. We will often refer to $\ds{\s=\frac{c}{(M+nm)\sqrt{g/l}}}$ 
as "damping in the frame", although it is also obviously affected by lenghts and masses of the 
pendulums and mass of the frame. Aside from serving as a stepping stone to the general case, the case of small $\s$ 
compares directly to the treatment in Chapter 5 of \cite{Bl2}. Similar extra assumption
is made there, but system \eqref{DimssSys} is not transformed into \eqref{SecDerSys}, making it necessary to 
neglect the backreaction of the frame on the pendulums given by $\ddot{y}$. In line with
the above, we set $\s=\mu b$ and subsume the $\dot{y}$
term in \eqref{FiSys} into the right hand side. After neglecting
the terms of order $\mu^{2}$ we finally arrive at
\begin{equation}\label{SmallSigSys}
\begin{cases}
\ddot{\theta}_{i}+\theta_{i}-\Omega^{2}y=\mu\left(\F_{i}-\F\right), \hspace{3em} i=1,...,n\,;\\
\ddot{y}+\Omega^{2}y=\mu\F\,,
\end{cases}
\end{equation}
 with $\F_{i}$ as above and 
\[
\F:=-b\dot{y}-2\,\Omega^{2}y+\sum_{i=1}^{n}\left(1+\dot{\theta}_{i}^{2}\right)\theta_{i}\,.
\]
System \eqref{SmallSigSys} can be rewritten in the first order form
as $\dot{\mathbf{x}}=\mathbf{A}\mathbf{x}+\mu\mathbf{\Phi}(\mathbf{x})$, where $\mathbf{x}=(\t_{1},\dot{\t_{1}},\dots,\t_{n},\dot{\t_{n}},y,\dot{y})^{T}$.
From now on we only consider the case of two pendulums, $n=2$. Then
explicitly 
\[
\mathbf{A}=\text{\ensuremath{\left[\begin{array}{cccccc}
 0  &  1  &  0  &  0  &  0  &  0\\
-1  &  0  &  0  &  0  &  \Omega^{2}  &  0\\
0  &  0  &  0  &  1  &  0  &  0\\
0  &  0  &  -1  &  0  &  \Omega^{2}  &  0\\
0  &  0  &  0  &  0  &  0  &  1\\
0  &  0  &  0  &  0  &  -\Omega^{2}  &  0 
\end{array}\right]}\qquad and\qquad }\mathbf{\Phi}=\left[\begin{array}{c}
0\\
\F_{1}-\F\\
0\\
\F_{2}-\F\\
0\\
\F
\end{array}\right]\,.
\]
The corresponding linear generating system (see Appendix for the terminology) is obtained by setting
$\mu=0$ in \eqref{SmallSigSys}. Poincar\'e theorem gives sufficient
conditions for existence and stability of periodic solutions to the
original non-linear system that converge to periodic solutions of
the generating system when $\mu\to0$. The latter can be found most
easily by diagonalizing $\mathbf{A}=\mathbf{V}\mathbf{\Lambda}\mathbf{V}^{-1}$. In our case, 
$\mathbf{\Lambda}=\text{diag}\,(i,i,-i,-i,i\O,-i\O)$
and $\mathbf{x}=\mathbf{V}\mathbf{\xi}$ with $\mathbf{\xi}=(\pfi_{1},\pfi_{2},\psi_{1},\psi_{2},u,v)^{T}$
given by 
\begin{equation}\label{ChangeVar}
\begin{cases}
\varphi_{k}=\frac{1}{2}\left(i\theta_{k}+\dot{\theta}_{k}\right)+\frac{1}{2}\frac{\Omega^{2}}{\Omega^{2}-1}\left(iy+\dot{y}\right)\\
\psi_{k}=\frac{1}{2}\left(-i\theta_{k}+\dot{\theta}_{k}\right)+\frac{1}{2}\frac{\Omega^{2}}{\Omega^{2}-1}\left(-iy+\dot{y}\right)\\
u=\frac{1}{2}\left(i\Omega y+\dot{y}\right),\,\, v=\frac{1}{2}\left(-i\Omega y+\dot{y}\right)\,.
\end{cases}
\end{equation}
Let us further assume that the system is non-resonant, i.e. the natural frequency of the frame 
$\O$ is not an integer or a half-integer. In fact, it is realistic to expect $|\O|\ll1$,
meaning that the clocks oscillate much faster than the frame. Making
the corresponding substitution in non-linear terms, namely 
\begin{gather}
\mathbf{f}(\mathbf{\xi})=\mathbf{V}^{-1}\Phi(\mathbf{V}\mathbf{\xi})\label{fVxi},
\end{gather}
we get the diagonalized system 
\begin{gather}
\dot{\mathbf{\xi}}=\mathbf{\Lambda}\mathbf{\xi}+\mu \mathbf{f}(\mathbf{\xi})\label{fVxi}.
\end{gather}
The general solution to the generating system is $\dot{\mathbf{\xi}}=\mathbf{\Lambda}\mathbf{\xi}$
is $\pfi_{k}=\alpha_{k}\mathrm{e}^{it}$, $\psi_{k}=\beta_{k}\mathrm{e}^{-it}$, $u=C\mathrm{e}^{i\O t}$,
$v=D\mathrm{e}^{-i\O t}$. It is not periodic since $\O$ may be irrational,
and, in any case, the periodic solutions we are interested in should
reduce to independent oscillations of the pendulums when the coupling
is removed, i.e. when $\mu=0$. To put it differently, we are looking for
periodic solutions with the amplitude of frame motion of order at
most $\mu$. This means that we are looking for non-linear counterparts
of the solutions with $C=D=0$. 

Physically, parameter $\mu$ measures the strength of the non-linear coupling of the pendulums to the frame. The generating system describes two pendulums attached to an immovable frame, so they are decoupled. Once the frame is allowed to move the coupling is turned on and the behavior changes. It turns out that the pendulums eventually settle into oscillating synchronously with limit amplitudes determined by their escapement mechanisms. Depending on the initial phases this settled motion can be in-phase or anti-phase. In the anti-phase regime the frame remains almost motionless once the synchronization occured, the pendulums behave as if they were nearly decoupled. More sizable motions are required to sustain the anti-phase regime, speeding up the oscillations compared to decoupled pendulums. The next theorem makes this description more precise providing explicit formulas for limit amplitudes and periods to the first order in $\mu$.
It also implies that both regimes reemerge after small disturbances. 
\begin{theorem}\label{Th3Dsmall}
For small $\mu>0$, $a,b>0$ and $2\O$ a non-integer system \eqref{SmallSigSys} with $n=2$ admits both
in-phase and anti-phase synchronized periodic solutions with limit amplitude
$2\g$, where $\g\neq0$ is the critical angle in the van der Pol escapement.
Aside from the trivial one, these are the only solutions that converge when $\mu\to0$ to pendulums independently oscillating with equal amplitudes, and motionless frame. The period of the anti-phase solution is 
$T(\mu)=2\pi+o(\mu)$, and that of the in-phase solution is $T(\mu)=2\pi-\frac{1+\gamma^2}{1-\O^2}\,\mu+o(\mu)$. 
Both solutions are stable. 
\end{theorem}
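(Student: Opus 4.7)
The plan is to apply the Poincar\'e theorem on periodic solutions of autonomous quasi-linear systems (as recalled in the Appendix) to the diagonalized system \eqref{fVxi}. First I would identify the relevant generating orbits. At $\mu=0$, the general solution of $\dot{\mathbf{\xi}} = \mathbf{\Lambda}\mathbf{\xi}$ is $\pfi_k = \alpha_k e^{it}$, $\psi_k = \bar\alpha_k e^{-it}$, $u = Ce^{i\O t}$, $v = \bar C e^{-i\O t}$, and Poincar\'e's theorem produces perturbed orbits only near $2\pi$-periodic generating orbits, so I set $C = 0$; via \eqref{ChangeVar} this amounts to a motionless frame $y = \dot y = 0$. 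The limit-as-$\mu\to 0$ condition in the statement then forces $|\alpha_1| = |\alpha_2| =: r$, and autonomy lets me absorb one overall phase by setting $\arg\alpha_1 = 0$ and writing $\phi = \arg\alpha_2$, so the generating family is the two-parameter set $\t_k(t) = 2r\sin(t+\phi_k)$.

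Next I would set up the determining (amplitude) equations. Expanding $T(\mu) = 2\pi + \mu h + o(\mu)$ and $\mathbf{\xi} = \mathbf{\xi}^{(0)} + \mu\mathbf{\xi}^{(1)}+\cdots$, the first-order periodicity conditions are the vanishing of the resonant ($e^{\pm it}$) Fourier coefficients of the $\pfi_k,\psi_k$ components of $\mathbf{f}(\mathbf{\xi}^{(0)}(t))$, giving three real equations in the three unknowns $(r,\phi,h)$. The non-resonance hypothesis $2\O \notin \mathbb{Z}$ enters precisely here: the $(u,v)$-components of $\mathbf{f}$ evaluated on the generating orbit have Fourier frequencies in $\{0,\pm 1,\pm 2\}$ (from $\dot\t_k^2\t_k$ and the linear pieces), none of which coincide with the natural frequencies $\pm\O$ of $(u,v)$, so the $(u,v)$-piece of $\mathbf{\xi}^{(1)}$ is automatically bounded and periodic and imposes no further constraint. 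Substituting $\F_k = a(\g^2 - \t_k^2)\dot\t_k$ and the definition of $\F$ into $\mathbf{\Phi}$ and averaging via elementary trigonometric identities, I expect the real parts of the averaged equations to force $r = \g$ (the isolated van der Pol limit, giving stable amplitude $2\g$ through $\t_k = 2r\sin(t+\phi_k)$), and the $\phi$-equation to reduce to $\sin\phi\cos\phi = 0$, yielding $\phi \in \{0,\pi\}$. The remaining equation then fixes $h = -\frac{1+\g^2}{1-\O^2}$ when $\phi = 0$ (in-phase, where the frame-coupling term is active) and $h = 0$ when $\phi = \pi$ (anti-phase, where the two escapement contributions cancel pairwise), matching the stated period corrections.

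Stability is read off from Poincar\'e's criterion: the nontrivial Floquet exponents of each perturbed orbit equal $\mu$ times the eigenvalues of the Jacobian of the determining system at the corresponding root, once the zero exponent from time translation is factored out. At both roots I would verify that this $2\times 2$ Jacobian (in the $(r,\phi)$ directions) has negative trace and positive determinant: the van der Pol contribution supplies a diagonal entry on the order of $-a\g^2$, and the damping coefficient $b>0$ feeds negatively into the $\phi$-direction through the average of $-b\dot y$. The main obstacle is algebraic bookkeeping: explicitly computing $\mathbf{f} = \mathbf{V}^{-1}\mathbf{\Phi}(\mathbf{V}\mathbf{\xi})$ in the six diagonalized coordinates and identifying which harmonics of $\F_k - \F$ survive averaging requires careful expansion, and the subtlest point is the uniqueness assertion, which demands checking that no other solution of the determining system yields a distinct periodic orbit once the trivial $r=0$ branch and the time-translation symmetry have been factored out.
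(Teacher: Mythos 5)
Your overall route is the same as the paper's: identify the four pendulum eigenvalues $\pm i$ as the leading special group, impose $C=D=0$ so the generating orbit has a motionless frame, reduce the Poincar\'e conditions \eqref{eq:main_conds} to equations in $(r,\phi)$, and read off amplitudes, period corrections and stability. Two points in your stability argument are genuine gaps, however. First, the determining system has $k-1=3$ real directions, not $2$: with $\alpha_3=\overline{\alpha_1}$, $\alpha_4=\overline{\alpha_2}$ and one phase fixed by time translation, the free parameters are $r_1$, $r_2$ and $\phi$, and the symmetric root $r_1=r_2=\gamma$ must be tested against perturbations with $r_1\neq r_2$. Condition \eqref{eq:thm_eq1} is therefore a $3\times3$ determinant yielding a cubic in $\varkappa$; in the paper it factors as $(\varkappa+a\gamma^2)$ times a quadratic, the linear factor being exactly the amplitude-asymmetric direction your $2\times2$ Jacobian in $(r,\phi)$ cannot see. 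Here that extra root happens to be negative, so the conclusion survives, but your method as described does not verify it. Second, you attribute the role of $b>0$ to the $\phi$-direction of that Jacobian "through the average of $-b\dot y$". On the generating orbit $y=\dot y=0$, so this term contributes nothing to the averaged pendulum equations or to \eqref{eq:thm_eq1} (the paper's reduced system \eqref{smallsamps} and its stability cubics contain no $b$ at all). The condition $b>0$ arises separately, from the stability test \eqref{eq:thm_eq3} applied to the non-special frame eigenvalues $\pm i\Omega$; omitting that test would leave the frame modes unexamined.

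A smaller issue concerns uniqueness. Your anticipated phase equation $\sin\phi\cos\phi=0$ also admits $\phi=\pm\pi/2$, so by itself it does not deliver "only $\phi\in\{0,\pi\}$". The actual reduced system contains the equation $\gamma^2(1+\gamma^2)\sin\phi=0$, which excludes the quadrature solutions outright; you need that full system, not just its imaginary part. Note also that the paper does not claim uniqueness among all amplitude pairs: the restriction $r_1=r_2$ is taken as a hypothesis (it is built into the theorem's statement about equal limiting amplitudes), and the authors explicitly say they could not solve \eqref{eq:main_conds} without it. Your phrasing that the limit condition "forces" equal amplitudes is acceptable only because the statement itself restricts to that class.
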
 
\begin{proof}
In this proof we use notation and terminology associated with the Poincar\'e theorem, see Appendix.
Recall that the diagonalized matrix of the generating system is $\mathbf{\Lambda}=\text{diag}\,(i,i,-i,-i,i\O,-i\O)$. 
We want our periodic solutions to reduce to oscillations of decoupled pendulums for $\mu=0$ meaning 
that only $\alpha_1,\dots,\alpha_4$ in \eqref{alphas} should be non-zero. Therefore, we must choose 
the first four eigenvalues to form the leading special group. The remaining two then belong to 
non-special groups, and the secondary special group is empty by assumption about $\O$. 

Our next task is to simplify constraint \eqref{eq:main_conds} from the Poincar\'e theorem for our case. By definition, $\xi=(\pfi_{1},\pfi_{2},\psi_{1},\psi_{2},u,v)^{T}$ and the relevant components of $\mathbf{\xi}$ are the first four. Since 
$\mathbf{x}=\mathbf{V}\mathbf{\xi}$ must be a real solution one can see from \eqref{ChangeVar} that $\phi_{k}$ and $\psi_{k}$ must be complex conjugates of each other for $k=1,2$. In view of \eqref{alphas} we have 
$\alpha_{k+2}=\overline{\alpha_{k}}$, and we represent $\alpha_{k}=r_{k}\mathrm{e}^{i\phi_{k}}$. 
Here $r_{k}$ is the half-amplitude of the $k$th pendulum and $\phi_{k}$ is its
phase. One can set $\phi_{2}=0$ without loss of generality since
time can be shifted by a constant, then synchronization is characterized
by a single phase $\phi_{1}=\phi$. By assumption about equal amplitudes 
set $r_{1}=r_{2}=r$. After some computations one can reduce 
\eqref{eq:main_conds} to a system of equations for $r$ and $\phi$, namely 
\begin{alignat}{1}\label{smallsamps}
\begin{cases}
\left(\gamma^{2}-r^{2}\right)r^{2}\mathrm{e}^{i\phi}=0\\
\gamma^{2}\left(1+\gamma^{2}\right)\sin(\phi)=0\\
r^{2}\left(1+r^{2}\right)\left(\mathrm{e}^{2i\phi}-1\right)=0\,.
\end{cases}
\end{alignat}
Discarding the trivial solution $r=0$, the only remaining ones are
$r=\g$, $\phi=0$ and $r=\g$, $\phi=\pi$, corresponding to the
in-phase and the anti-phase synchronization respectively. Substituting the values of $\alpha_{s}$ 
corresponding to each solution into \eqref{eq:per_cor} we get the first order period corrections.

Stability condition \eqref{eq:thm_eq1} reduces to the following equations having only roots with negative real parts
\begin{align*}
\phi=0:&\quad(\varkappa+a\g^{2})\Bigl((1-\O^{2})^{2}\varkappa^{2}+a\g^{2}(1-\O^{2})^{2}\varkappa+(1+\g^{2})^2\Bigr)=0\,;\\
\phi=\pi:&\quad(\varkappa+a\g^{2})\Bigl((1-\O^{2})^{2}\varkappa^{2}+a\g^{2}(1-\O^{2})^{2}\varkappa
+3\g^{4}+4\g^{2}+1\Bigr)=0\,.
\end{align*}
This condition is equivalent to all the coefficients of the above polynomials being positive, which
happens if $a>0$, $\g\neq0$ and $\O\neq\pm1$. Stability condition \eqref{eq:thm_eq3} for non-special eigenvalues reduces to $b>0$ in both cases. 
\end{proof}
Numerical simulations in Section \ref{S5} show that the limit amplitude usually gives a good approximation to actual amplitudes when $\mu$ is small. We believe that these are the only solutions to the Poincar\'e system \eqref{eq:main_conds} even without the assumption of equal amplitudes. However, we were unable to solve it for $r_{1}$, $r_{2}$ and $\phi$ without assuming that $r_{1}=r_{2}$. We can only show that if $\phi=0$ or $\phi=\pi$ is assumed in advance then $r_{1}=r_{2}$ follows from the equations. Based on simulations, we also believe that for small $\mu$ these are the only attractors, i.e. the trivial solution is spurious (no periodic solutions with amplitudes of order $\mu$), and no quasiperiodic or chaotic behavior occurs.

We see that for small $\s$ there holds a counter-intuitive symmetry: both synchronization regimes always coexist and have the same stable amplitudes. This conclusion coincides with Blekhman's in \cite{Bl2} even though he neglects the second derivative of the frame's position and applies the Poincar\'e theorem to system \eqref{DimssSys} directly. However, such coexistence of regimes was not observed in some experiments, see \cite{Ben}. We shall see in the next section that this symmetry is a deceptive consequence of $\s$ being small, increasing it sufficiently eliminates the in-phase regime altogether.

\section{Damping in the frame and the in-phase synchronization}\label{S3}

In this section we no longer assume that damping in the frame is small
and investigate how its magnitude influences synchronization. It is this assumption, it turns out, that puts the two synchronization regimes on an equal footing in Theorem \ref{Th3Dsmall}. Recall
that \eqref{SmallSigSys} was obtained from \eqref{FiSys} after assuming that $\s$ is small. 
Without this assumption, after neglecting the terms of order $\mu^{2}$, we get instead 
\begin{equation}\label{SigSys}
\begin{cases}
\ddot{\theta}_{i}+\theta_{i}-\sigma\dot{y}-\Omega^{2}y=\mu\left(\F_{i}-\F\right)\\
\ddot{y}+\sigma\dot{y}+\Omega^{2}y=\mu\F
\end{cases}
\end{equation}
with $\F_{i}:=a(\g^{2}-\t_{i}^{2})\dot{\t}_{i}$ and 
\[
\F:=-n(\s\dot{y}+\O^{2}y)+\sum_{i=1}^{n}(1+\dot{\t}_{i}^{2})\t_{i}\,.
\]
As in the previous section, we rewrite this as a first order system 
$\dot{x}=\mathbf{A}\mathbf{x}+\mu\mathbf{\Phi}(\mathbf{x})$,
where $\mathbf{x}=(\t_{1},\dot{\t_{1}},\dots,\t_{n},\dot{\t_{n}},y,\dot{y})^{T}$.
For two pendulums one has
\[
\mathbf{A}=\left[\begin{array}{cccccc}
0 & 1 & 0 & 0 & 0 & 0\\
-1 & 0 & 0 & 0 & \Omega^{2} & \sigma\\
0 & 0 & 0 & 1 & 0 & 0\\
0 & 0 & -1 & 0 & \Omega^{2} & \sigma\\
0 & 0 & 0 & 0 & 0 & 1\\
0 & 0 & 0 & 0 & -\Omega^{2} & -\sigma
\end{array}\right]\text{\qquad and \qquad}\mathbf{\Phi}=\left[\begin{array}{c}
0\\
\F_{1}-\F\\
0\\
\F_{2}-\F\\
0\\
\F
\end{array}\right]
\]
The diagonalized matrix is $\mathbf{\Lambda}=\text{diag}\,(i,i,-i,-i,-\frac{1}{2}\left(\sigma-\sqrt{\sigma^{2}-4\Omega^{2}}\right),-\frac{1}{2}\left(\sigma+\sqrt{\sigma^{2}-4\Omega^{2}}\right)\,)$,
we omit the diagonalizing transformation $V$. 

Behavior of the system is most conveniently described in terms of a modified damping parameter
$\widetilde{\s}:=\frac{\s}{a\left(\left(1-\Omega^{2}\right)^{2}+\sigma^{2}\right)}$. For small values of 
$\Omega$ and $\sigma$ it is close to the ratio of damping $\sigma$ to the escapement strength $a$. According to the theorem below the system undergoes two transitions as $\sigma$ and hence $\widetilde{\s}$ increase. For values smaller 
than $\widetilde{\s}:=\frac{\gamma^2}{2(2+\gamma^2)}$ both synchronization regimes coexist and are stable. This agrees with the conclusions of the previous section. The settled frequency of the in-phase oscillations decreases as the damping in the frame grows, but still remains higher than for the anti-phase oscillations.

Once the above value is exceeded the in-phase regime destabilizes, but remains in the picture until the value 
$\widetilde{\s}:=\frac{\gamma^2}{2}$ is attained. At this point the in-phase regime disappears altogether. Parameter values in the experiments of Bennett et al. \cite{Ben} (and presumably Huygens) seem to fall within this last range. This means that if clocks are started with nearly equal phases they will eventually settle into anti-phase oscillations with amplitudes determined by the escapement mechanism. This agrees with observations and indicates that $\sigma$ should not be treated as a small parameter when the Poincare method is applied.
\begin{theorem}\label{Th3D} 
For small $\mu>0$ and $a,\s>0$ system \eqref{SigSys} with $n=2$ admits anti-phase
synchronized periodic solutions with the limit amplitude $2\g$, where $\g\neq0$
is the critical angle in the van der Pol escapement. In-phase synchronized
periodic solutions can only exist if $\widetilde{\s}<\frac{\gamma^2}{2}$, where
$\widetilde{\s}:=\frac{\s}{a}\left(\left(1-\Omega^{2}\right)^{2}+\sigma^{2}\right)^{-1}$,
and have amplitude $2\sqrt{\frac{\gamma^{2}-2\widetilde{\s}}{1+2\widetilde{\s}}}$. 
Aside from the trivial one, these are the only solutions that converge when $\mu\to0$ to pendulums independently 
oscillating with equal amplitudes, and motionless frame. In-phase solutions exist and are
stable at least if the following inequalities hold: either $a\sigma<1$ and 
$\widetilde{\s}<\frac{\gamma^2}{2(2+\gamma^2)}$, or $a\sigma>1$ and $\frac{a\s-1}{a\sigma}\frac{\gamma^2}{2}<\widetilde{\s}<\frac{\gamma^2}{2(2+\gamma^2)}$. The period of the anti-phase solutions is $T(\mu)=2\pi+o(\mu)$, and that of the in-phase solutions is $$T(\mu)=2\pi-\frac{(1-\O^2)(1+\gamma^2)}{(1-\O^2)^2+2\,\frac{\sigma}a+\sigma^2}\,\mu+o(\mu)\,.$$
\end{theorem}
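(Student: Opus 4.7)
The plan is to apply the Poincaré theorem in the same framework used for Theorem~\ref{Th3Dsmall}, but now treating the expressions involving $\s$ as ordinary parameters rather than expanding in powers of $\s$. The leading special group again consists of the four eigenvalues $\pm i$ of $\mathbf{\Lambda}$ corresponding to the pendulum modes, and the remaining two eigenvalues $-\tfrac{1}{2}(\s\pm\sqrt{\s^2-4\O^2})$ are non-special with strictly negative real parts for $\s>0$, so the secondary special group is empty. Reality of $\mathbf{x}=\mathbf{V}\mathbf{\xi}$ again forces $\alpha_{k+2}=\overline{\alpha_k}$, and under the assumption of equal pendulum amplitudes I parametrise candidate periodic solutions by $(r,\phi)$ as before, with $\alpha_k=r\,\mathrm{e}^{i\phi_k}$, $\phi_2=0$ and $\phi_1=\phi$.

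The main computation is to carry out $\mathbf{f}(\mathbf{\xi})=\mathbf{V}^{-1}\mathbf{\Phi}(\mathbf{V}\mathbf{\xi})$ and extract the averaged projection onto the $\pm i$ eigendirections required by \eqref{eq:main_conds}. The crucial new feature compared with Section~\ref{S2} is that both $\mathbf{V}$ and $\mathbf{V}^{-1}$ now carry $\s$-dependent entries coming from the $\s\dot y$ coupling in $\mathbf{A}$; these should consolidate into the Green's-function-type quantity $(1-\O^2)^2+\s^2$, so the averaged projections carry factors naturally identifiable with $\widetilde{\s}$. After discarding $r=0$, the averaged equations should reduce to a sinusoidal constraint forcing $\phi\in\{0,\pi\}$ together with a scalar amplitude equation. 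For $\phi=\pi$ the $\widetilde{\s}$-dependent contributions cancel by the antisymmetry between the two pendulums, leaving $r=\g$; for $\phi=0$ the contributions add, giving $r^2=(\g^2-2\widetilde{\s})/(1+2\widetilde{\s})$, which is real only when $\widetilde{\s}<\g^2/2$. Substitution of the corresponding $\alpha_s$ into \eqref{eq:per_cor} then produces the displayed period corrections.

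Stability is handled just as in Theorem~\ref{Th3Dsmall}: condition \eqref{eq:thm_eq1} on the $4\times4$ block of averaged linearisations factors as $(\varkappa+a\g^2)P(\varkappa)=0$ with $P$ quadratic, and stability amounts to Routh-Hurwitz positivity of the two coefficients of $P$. The anti-phase quadratic has both coefficients strictly positive for any $a,\s>0$, giving unconditional stability. For the in-phase branch, positivity of the constant term of $P$ rearranges to $\widetilde{\s}<\g^2/(2(2+\g^2))$, and positivity of the linear term splits according to the sign of $1-a\s$, yielding exactly the two cases listed. Condition \eqref{eq:thm_eq3} for the two non-special eigenvalues is automatic since their real parts are $-\s/2<0$. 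I expect the main obstacle to be the algebraic bookkeeping in deriving the averaged system: tracking how the $\s$ entries of $\mathbf{V}^{-1}$ interact with the cubic nonlinearities $\dot\t_i^2\t_i$ and the van der Pol terms, and recognising $\widetilde{\s}$ as the natural parameter in which both the in-phase existence threshold and the two stability thresholds take the clean forms in the statement.
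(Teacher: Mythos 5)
Your proposal follows essentially the same route as the paper's proof: the leading special group consists of the four pendulum eigenvalues $\pm i$, the reduced Poincar\'e constraints \eqref{eq:main_conds} under the equal-amplitude ansatz force $\phi\in\{0,\pi\}$ and yield $r=\g$ for the anti-phase branch and $r^2=(\g^2-2\widetilde{\s})/(1+2\widetilde{\s})$ for the in-phase branch, with stability read off from Routh--Hurwitz positivity of the factored characteristic polynomial in \eqref{eq:thm_eq1}. Only small details differ from the paper's computation: the remaining two eigenvalues are non-critical (negative real parts), so \eqref{eq:thm_eq3} does not apply at all rather than being ``automatic''; the linear factor for the in-phase branch is $\varkappa+a(\g^2-2\widetilde{\s})$ rather than $\varkappa+a\g^2$ (its root is still negative under the stated conditions); and it is the linear coefficient of the quadratic factor that produces $\widetilde{\s}<\g^2/(2(2+\g^2))$ while the constant term produces the dichotomy in $a\s$, i.e.\ the opposite of the attribution in your sketch.
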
 
\begin{proof}
The general outline of the proof is the same as in Theorem \ref{Th3Dsmall}. The leading special group is again formed by the first four eigenvalues and the remaining two are non-critical for $\s>0$. We wish to simplify constraint 
\eqref{eq:main_conds} from the Poincar\'e theorem. After setting $\alpha_{k}=r_{k}\mathrm{e}^{i\phi_{k}}$ and taking into account that $\alpha_{k+2}=\overline{\alpha_{k}}$, $r_{1}=r_{2}=r$, $\phi_{2}=0$ and $\phi_{1}=\phi$ one can obtain: 
\begin{alignat*}{1}
\begin{cases}
\mathrm{e}^{i\phi}\left(\gamma^{2}-r^{2}\right)-\widetilde{\s}\mathrm{e}^{i\phi}\left(1+r^{2}\right)
-\widetilde{\s}\left(1+r^{2}\right)=0\\
\left[i\left(\Omega^{2}-1\right)-\sigma\right]\left(1+r^{2}\right)\mathrm{e}^{-i\phi}-\left[i\left(\Omega^{2}-1\right)+\sigma\right]\left(1+r^{2}\right)\mathrm{e}^{i\phi}\\
\hspace{0.25in}+2\left[\left(1-\Omega^{2}\right)^{2}+\sigma^{2}\right]\left(\gamma^{2}-r^{2}\right)a-2\sigma\left(1+r^{2}\right)=0\\
\left(\sigma+i\left(1-\Omega^{2}\right)\right)\left(1-\mathrm{e}^{-2i\phi}\right)\left(1+r^{2}\right)=0\,.
\end{cases}
\end{alignat*}
Last equation immediately implies that $\phi=0$ or $\phi=\pi$. Substituting these values into the first equation we get the desired half-amplitudes $r$ for the in-phase and the anti-phase regimes. The second equation is then satisfied automatically. Expressions for periods are obtained by substituting the values of $\alpha_{s}$ 
corresponding to each solution into \eqref{eq:per_cor}. 

Stability condition \eqref{eq:thm_eq1} for $\phi=\pi$ reduces to the equation
\[
(\varkappa+a\g^{2})\Bigl(\varkappa^{2}+a\left((1+4\widetilde{\s})\g^{2}+2\widetilde{\s}\right)\varkappa
+a\frac{\widetilde{\s}}{\s}(1+\g^{2})\left(1+\g^{2}(a\s+3)\right)\Bigr)=0
\]
having only roots with negative real parts. For this to hold, all the coefficients must be positive. 
One can see by inspection that they are as long as $a,\s>0$ and $\g\neq0$. The same stability condition
for $\phi=0$ is somewhat more involved:
\[
\left(\varkappa+a(\g^{2}-2\widetilde{\s})\right)
\Bigl(\varkappa^{2}+a\frac{\g^{2}-2\widetilde{\s}(2+\g^{2})}{1+2a\widetilde{\s}}\varkappa
+a\frac{\widetilde{\s}(1+\g^{2})}{\s(1+2\widetilde{\s})^2}\left(1+2a\s\widetilde{\s}+\g^{2}(1-a\s)\right)\Bigr)=0.\\
\]
If $\widetilde{\s}<\frac{\gamma^2}{2(2+\gamma^2)}$ and $a\sigma<1$, or 
$a\sigma>1$ and $\frac{a\s-1}{a\sigma}\frac{\gamma^2}{2}<\widetilde{\s}<\frac{\gamma^2}{2(2+\gamma^2)}$ then the coefficients in the second parentheses are positive. In both cases $\widetilde{\s}<\frac{\gamma^2}{2}$ is automatically satisfied, so that the coefficient in the first parentheses is also positive. Other stability conditions (\ref{eq:thm_eq2},\ref{eq:thm_eq3}) do not apply here since the remaining eigenvalues are non-critical. 
\end{proof} 
Note that the no resonance condition on $\O$ is no longer necessary because $\s>0$ suppresses any resonance. Moreover, 
setting $\O=0$, i.e. eliminating the stiffness altogether, does not qualitatively alter the conclusions.
In practice, when the escapement has an engagement threshold the in-phase regime disappears even sooner, 
at least when the asymptotic amplitude falls below the threshold. The in-phase regime is then precluded by one of the pendulums switching off completely. For this to happen, it is not even necessary for asymptotic amplitude 
to be below the threshold as long as the pendulum angles fall below it before the onset of synchronization. Numerical simulations (see Section \ref{S5}) also show that when the in-phase regime is unstable, a period of beats may precede the anti-phase synchronization. In the course of beating the two pendulums exchange much of their energy back and forth with pendulum angles falling far below their initial or eventual asymptotic values. This is likely the reason why vanishing of oscillations in one of the clocks was observed in the experiments of Bennett et al. \cite{Ben}.

\section{Numerical simulations}\label{S5}

In this section we use numerical simulations to illustrate the existence and stability conditions of Theorem
\ref{Th3D}. 
\begin{figure}[htbp]
\begin{centering}
(a)\includegraphics[scale=0.33]{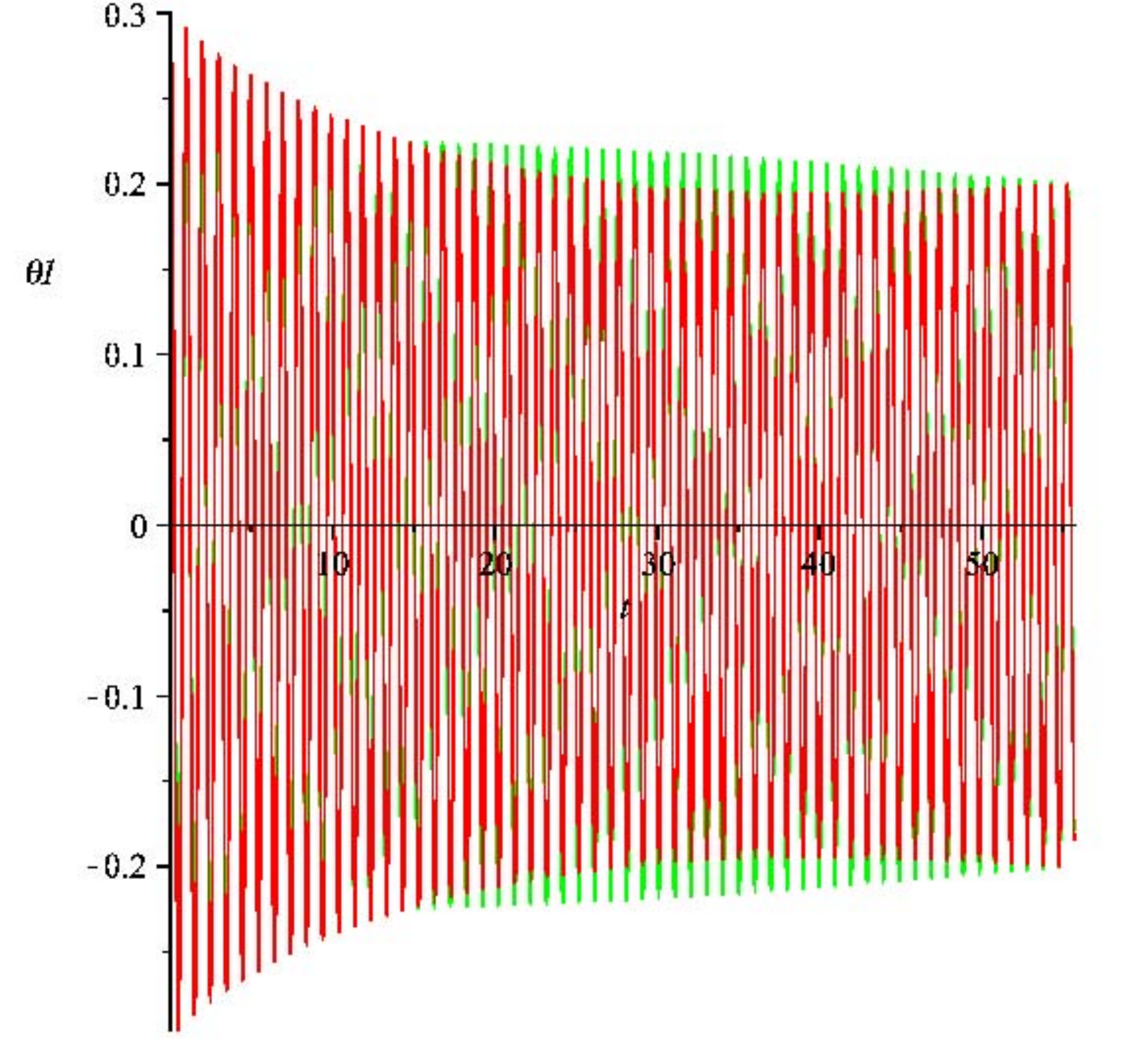}\quad(b)\includegraphics[scale=0.33]{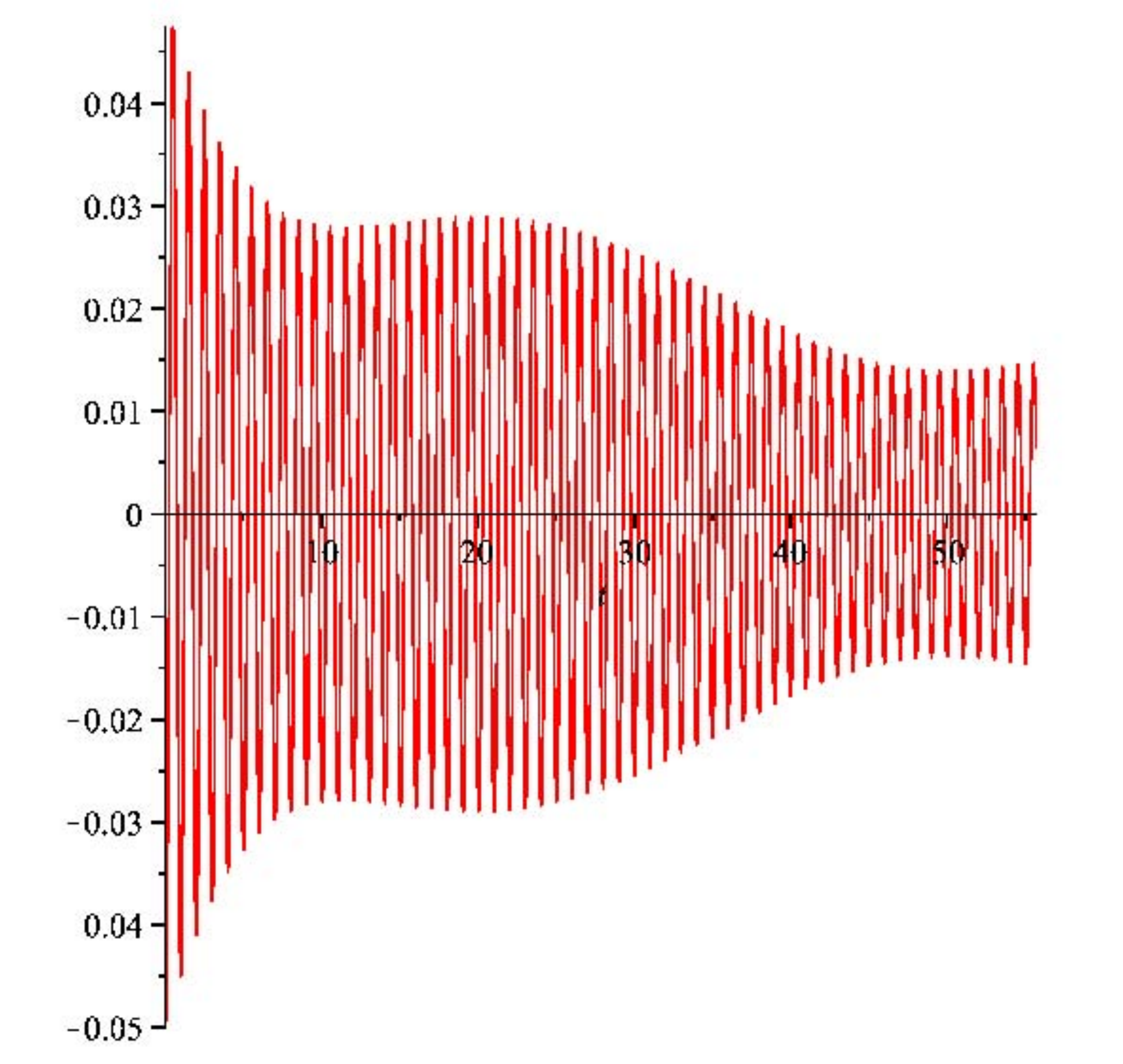}
\caption{\label{Czin}In-phase synchronization for parameters of Czolczynski et al. with $\theta_1(0)=0.25$ and $\theta_2(0)=0.3$. (a) $\theta_1(t)$ (lighter) and $\theta_2(t)$ (darker) superimposed, (b) $\theta_1(t)-\theta_2(t)$.}
\end{centering}
\end{figure}
We also wish to demonstrate that the theory based on the Poincar\'e method can explain, at least qualitatively, conflicting reports on the type of synchronization observed in experiments. 
Correspondingly, we tried to use the values of parameters from the experiments of Bennett et al.
\cite{Ben} and Czolczynski et al. \cite{Cz2}. Unfortunately, in the case of the former the information on these values is insufficient. 
\begin{figure}[htbp]
\begin{centering}
(a)\includegraphics[scale=0.32]{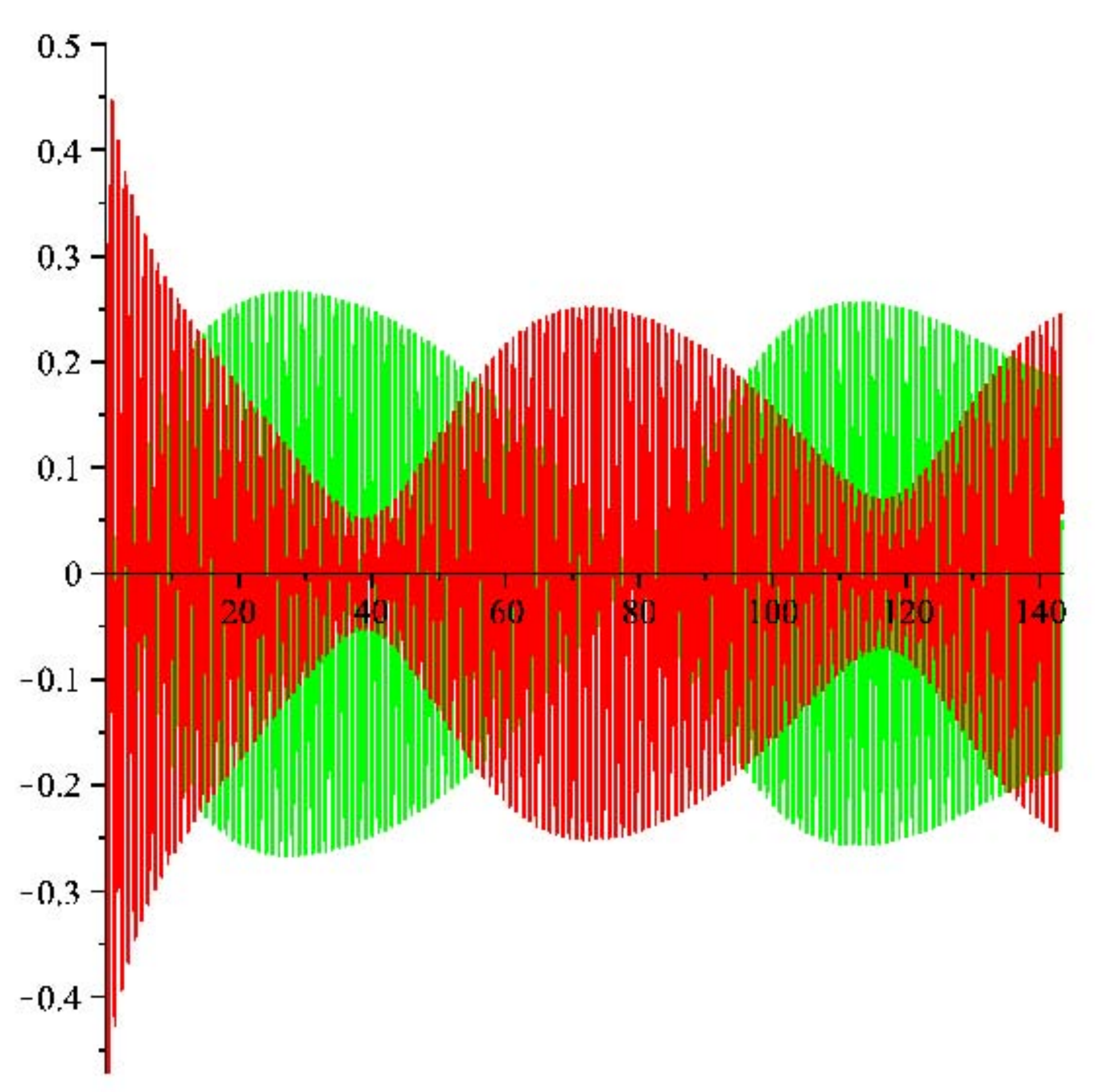}\quad(b)\includegraphics[scale=0.32]{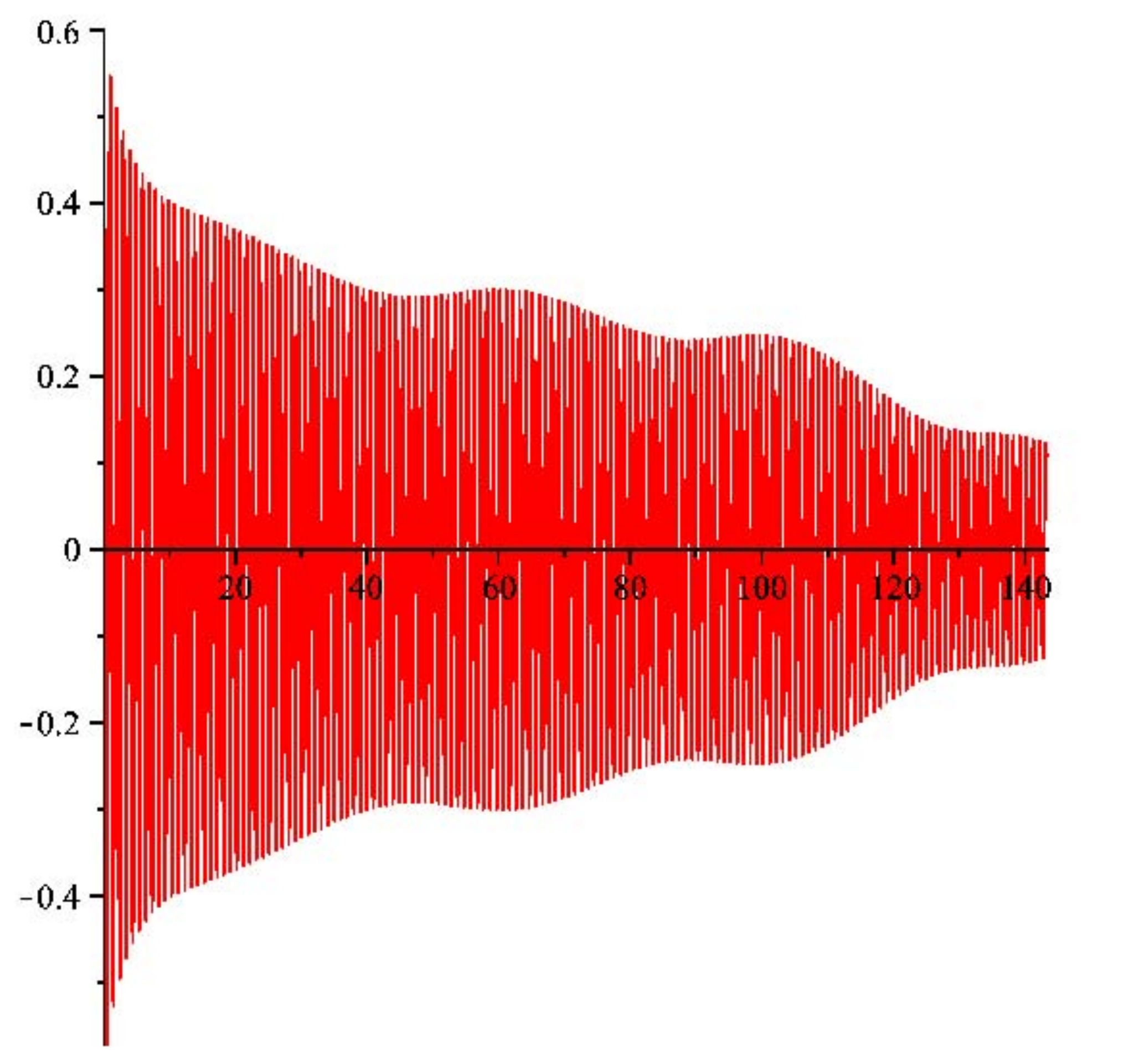}
\caption{\label{Czbeats}Anti-phase synchronization with beats for parameters of Czolczynski et al. with $\theta_1(0)=0.1$ and $\theta_2(0)=0.5$. (a) $\theta_1(t)$ (lighter) and $\theta_2(t)$ (darker) superimposed, (b) $\theta_1(t)+\theta_2(t)$.}
\end{centering}
\end{figure}

First three pairs of graphs correspond to the values $g=9.81$, $m=0.158$, $c=M=11.856$, $k=1.186$, $l=0.269$ from \cite{Cz2}, we picked $\g=0.122$ for the critical angle and $\varepsilon=5.047$ for the escapement strength. Time is measured in the number of cycles of a decoupled harmonic oscillator with the same parameters. One easily checks that $a\sigma<1$ and $\widetilde{\s}<\frac{\gamma^2}{2(2+\gamma^2)}$ in the notation from Theorem \ref{Th3D}, hence both synchronization regimes exist and are locally stable. When the initial values are close (Figure \ref{Czin}) the system synchronizes in-phase. 
\begin{figure}[htbp]
\begin{centering}
(a)\includegraphics[scale=0.3]{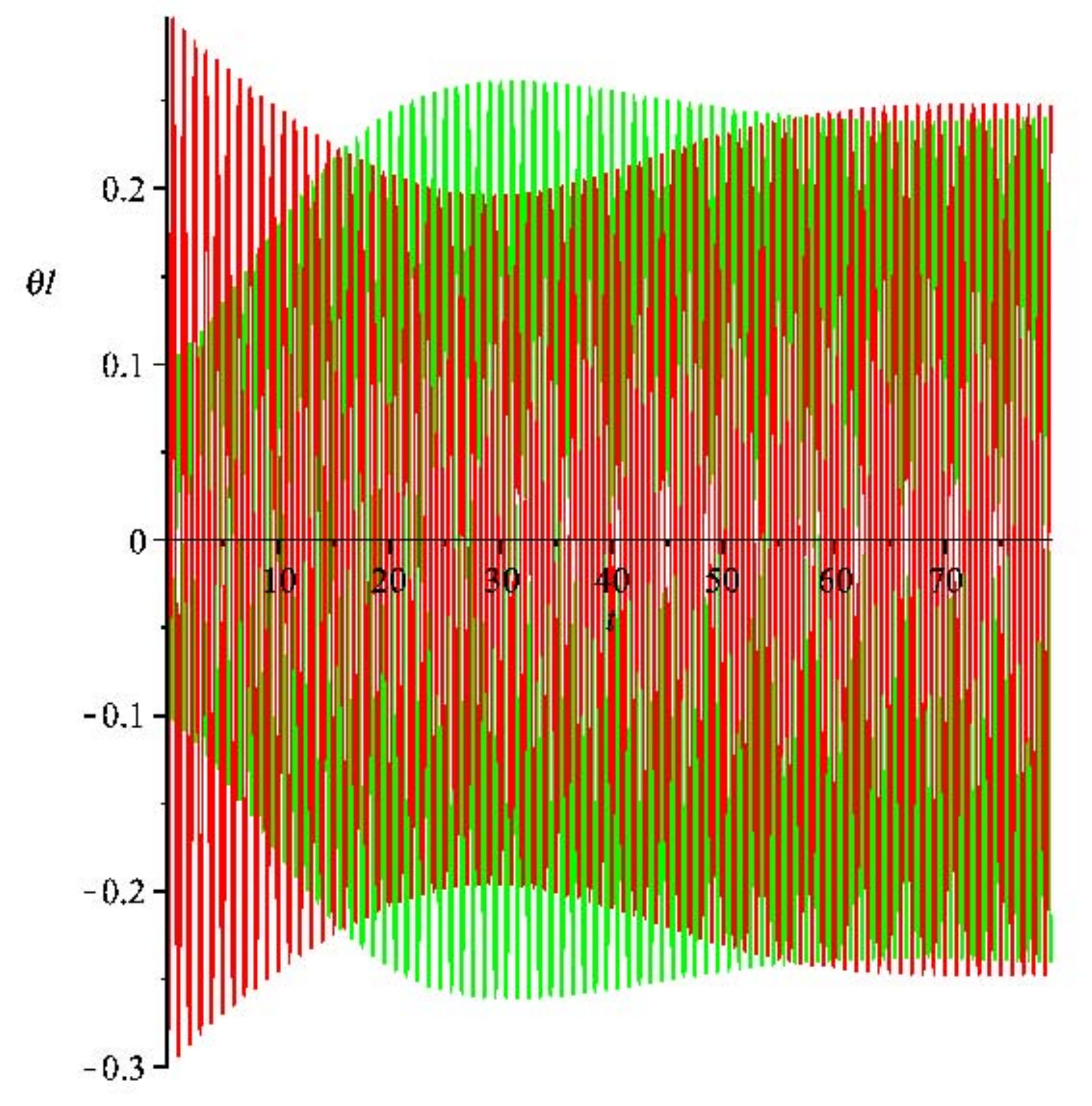}\quad(b)\includegraphics[scale=0.3]{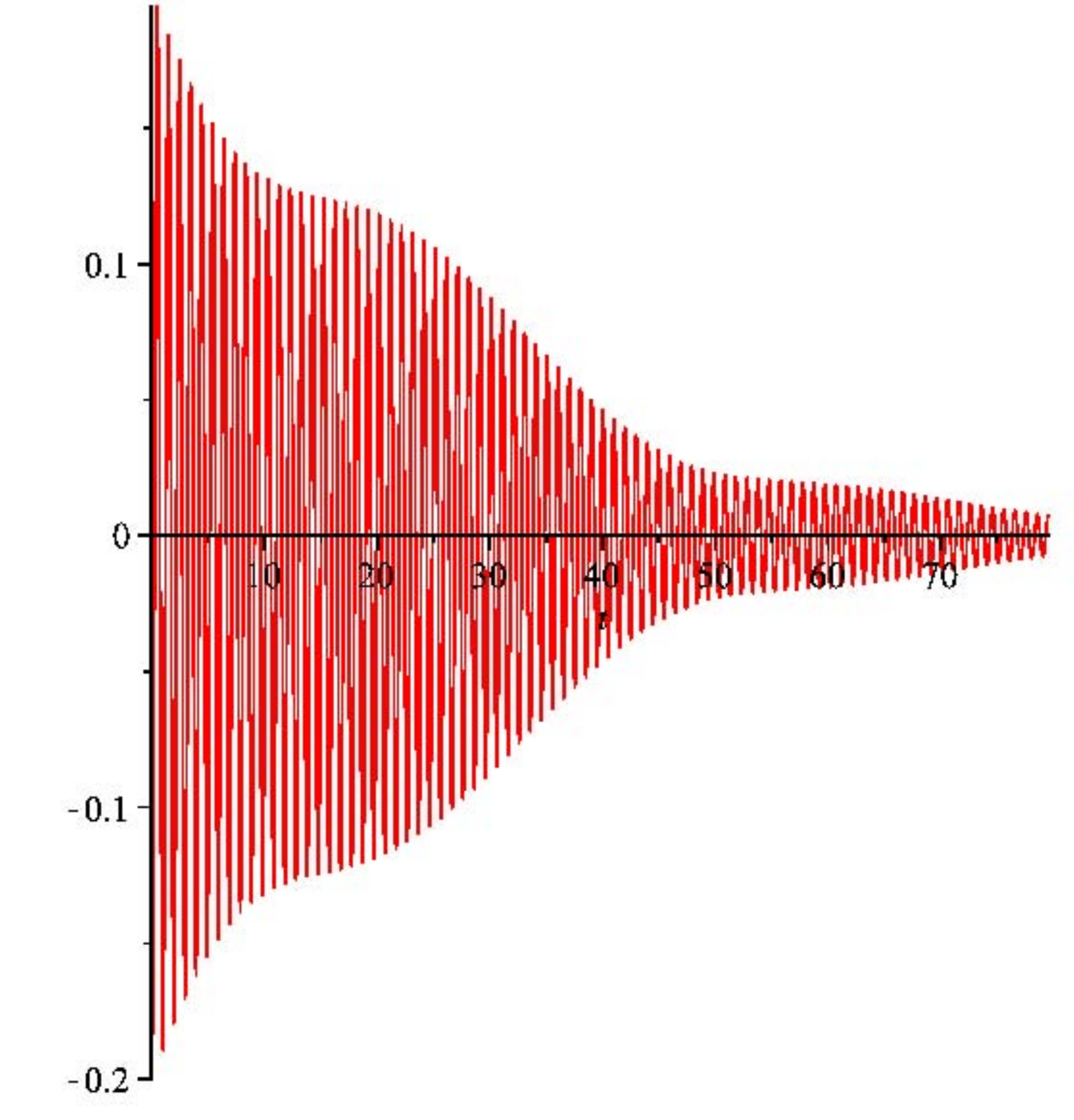}
\caption{\label{Czanti}Anti-phase synchronization for parameters of Czolczynski et al. with $\theta_1(0)=0.1$ and 
$\theta_2(0)=-0.3$. (a) $\theta_1(t)$ (lighter) and $\theta_2(t)$ (darker) superimposed, (b) 
$\theta_1(t)+\theta_2(t)$.}
\end{centering}
\end{figure}
As they are pulled apart, the asymptotic regime switches to anti-phase, but it is preceded by the depicted period of beats (Figure \ref{Czbeats}). Finally, when the initial values are sufficiently far apart (Figure \ref{Czanti}), the anti-phase synchronization is achieved faster and without beats. This is in line with the observations of Czolczynski et al. In all three cases the stable amplitudes are nearly equal to the limit values given by Theorem \ref{Th3D}.
\begin{figure}[htbp]
\begin{centering}
(a)\includegraphics[scale=0.3]{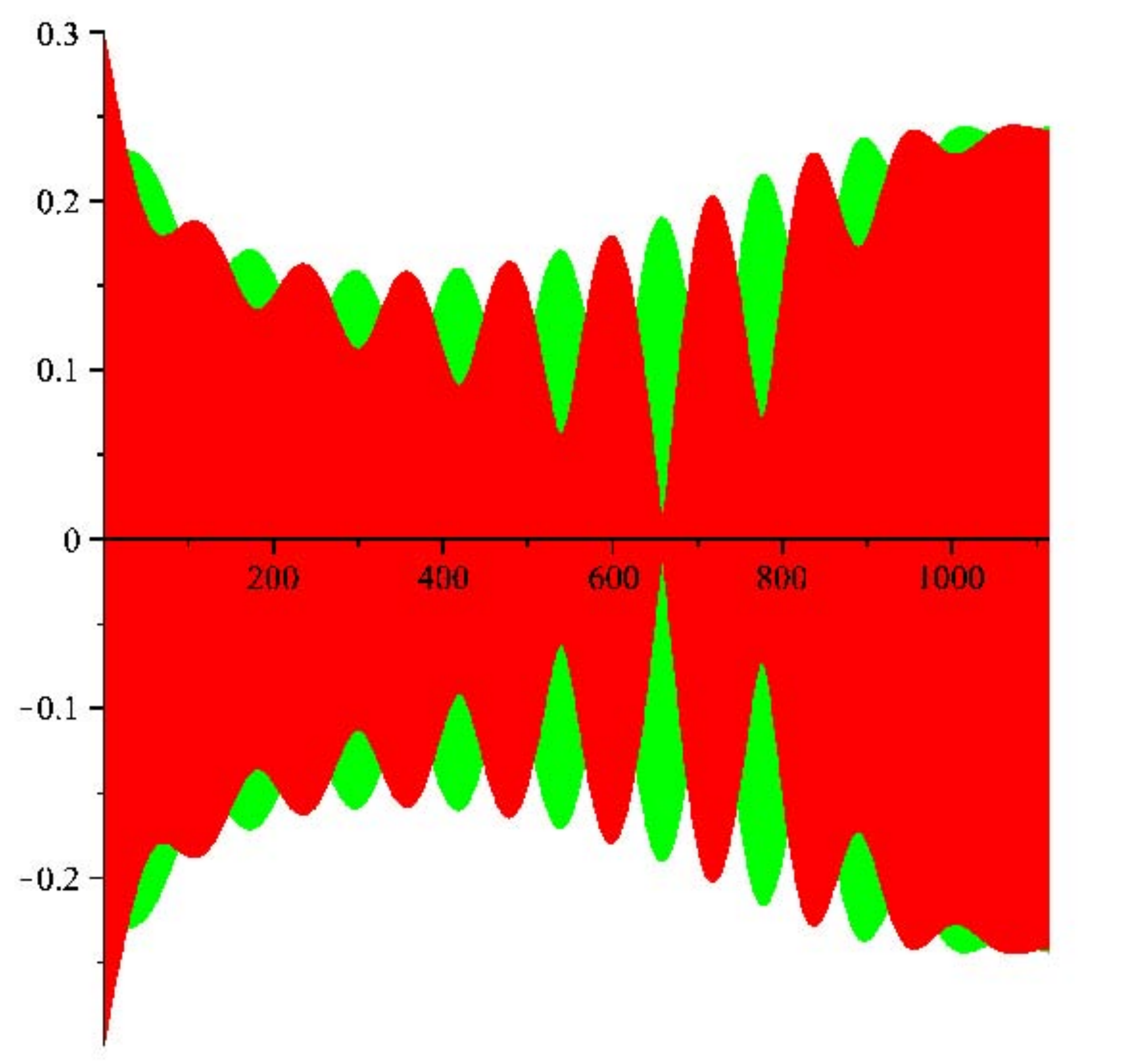}\quad(b)\includegraphics[scale=0.3]{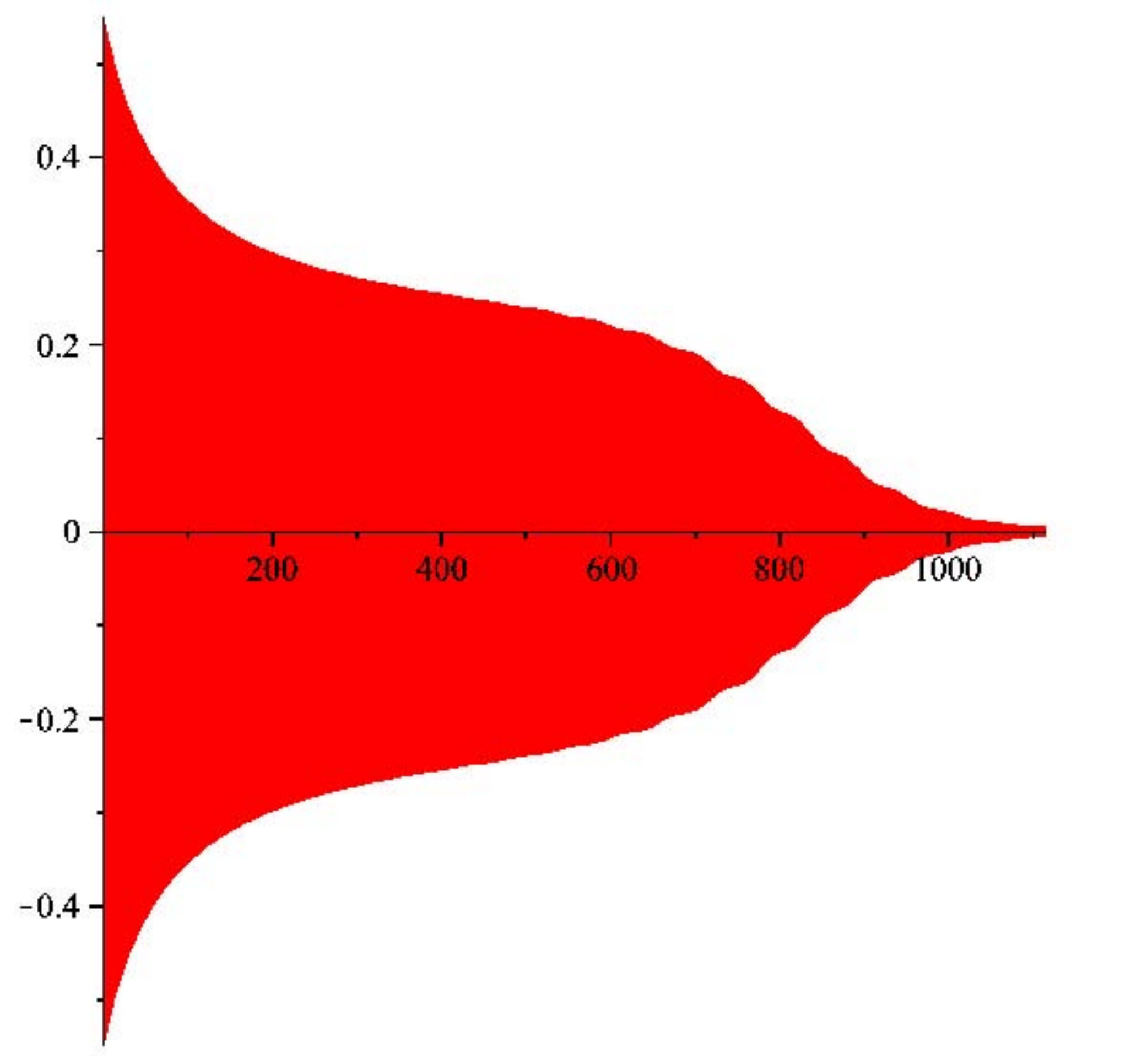}
\caption{\label{Bndeath}Anti-phase synchronization for parameters of Bennett et al., $M=9.716$, $c=k=M$, with $\theta_1(0)=0.25$ and $\theta_2(0)=0.3$. (a) $\theta_1(t)$ (lighter) and $\theta_2(t)$ (darker) superimposed, (b) 
$\theta_1(t)+\theta_2(t)$.}
\end{centering}
\end{figure}

Next, we tried to simulate the experiments of \cite{Ben} with $g=9.81$, $m=0.082$, $l=0.14$.  Unfortunately, there is no information on the values of stiffness $k$ or damping $c$. The mass $M$ was adjustable
in \cite{Ben} and we show results for two different choices of its value. We kept $\g=0.122$ and assigned the rest of the parameters as indicated in the captions. 
\begin{figure}[htbp]
\begin{centering}
(a)\includegraphics[scale=0.3]{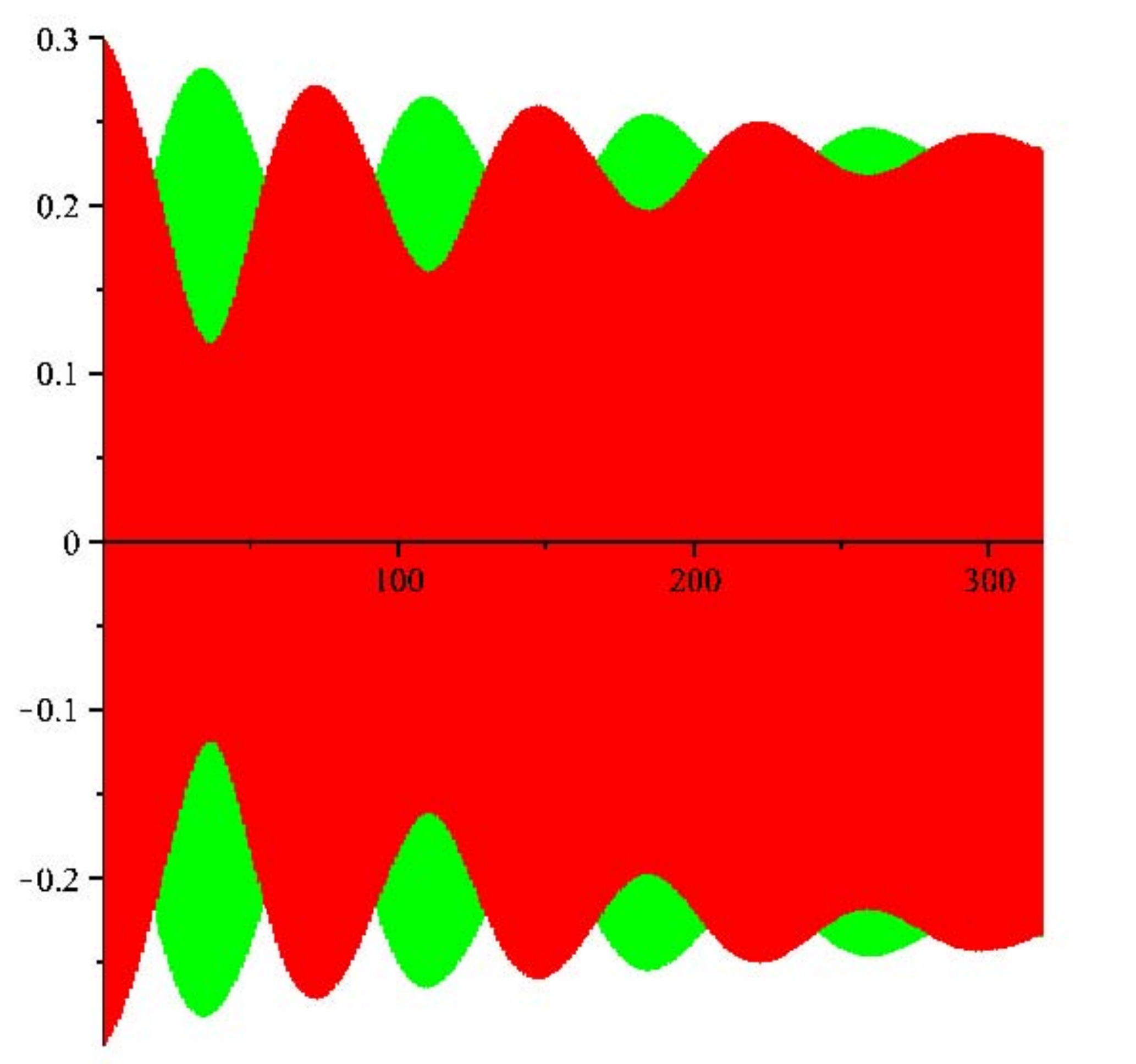}\quad(b)\includegraphics[scale=0.3]{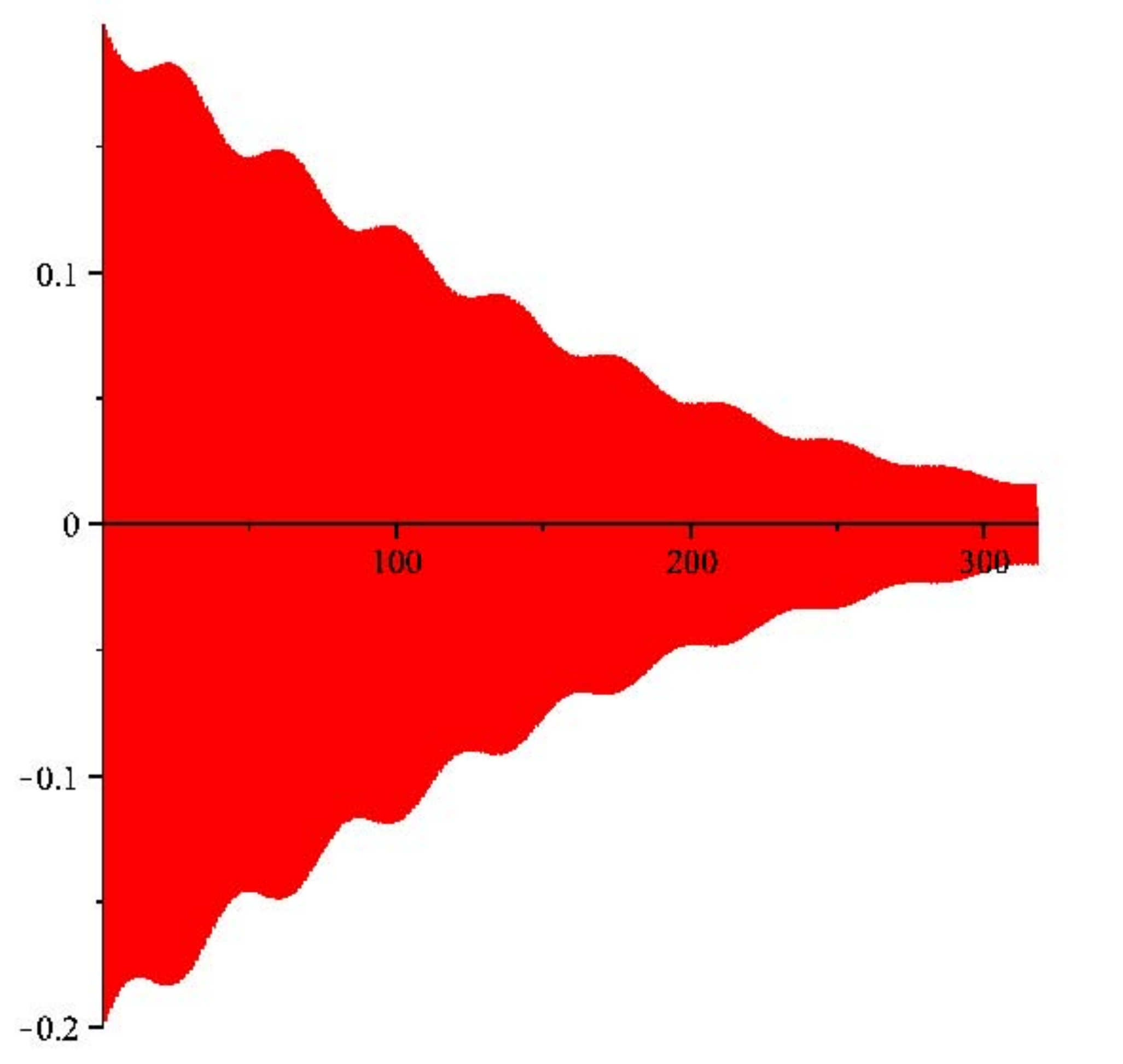}
\caption{\label{Bnbeats}In-phase synchronization for parameters of Bennett et al., $M=6.143$, $c=k=0.614$, with $\theta_1(0)=0.1$ and $\theta_2(0)=0.3$. (a) $\theta_1(t)$ (lighter) and $\theta_2(t)$ (darker) superimposed, (b) 
$\theta_1(t)-\theta_2(t)$.}
\end{centering}
\end{figure}
In the case of Figure \ref{Bndeath} the second inequality from Theorem \ref{Th3D} fails and the in-phase regime is unstable. The system synchronizes in anti-phase but only after rather violent beating (individual oscillations are not visible on the figure because of a large time interval). The onset of synchronization takes much longer than in the previous cases in line with the observations of Bennett et al. In fact, we adjusted the missing parameters to match the onset times from \cite{Ben}. Depending on the escapement threshold one may observe either the anti-phase synchronization or the vanishing of oscillations in one of the pendulums in this case. To stabilize the in-phase regime we had to decrease the damping by more than an order of magnitude (Figure \ref{Bnbeats}). We do not believe that such a 
value of $c$ is  realistic for the experiments of \cite{Ben} (or of Huygens) explaining why they never observed the in-phase synchronization. One can see that the onset of the in-phase synchronization can also be preceded by beats. 

As one can see, Theorem \ref{Th3D} provides some theoretical guidance to observing coexistence of synchronization regimes, or lack thereof, in experimental setups. First, the van der Pol parameters $a$ and $\gamma$ should be adjusted to match the system as closely as possible, e.g. along the lines of \cite{Pan}. Then the inequalities $a\sigma<1$ and 
$\widetilde{\s}<\frac{\gamma^2}{2(2+\gamma^2)}$ give the parameter range, where both types of synchronization can be expected to manifest (the other pair of inequalities from Theorem \ref{Th3D} calls for unrealistically large values of damping $\sigma$). When $\frac{\gamma^2}{2(2+\gamma^2)}<\widetilde{\s}<\frac{\gamma^2}{2}$ one can expect beats to precede the onset of anti-phase synchrionization due to the presence of unstable in-phase regime.

\section{Frame model with two masses}\label{S4}

Starting with Korteweg most work on the Huygens' problem was based on variations of the 
three degrees of freedom model we considered above. In it the casings of the clocks and the connecting body (beam, platform) are consolidated into a single massive frame. However, it is the casings that make the frame massive, the connecting body itself is actually quite light \cite{Ben,Cz2,Pan}. In this section we consider a refinement of this model suggested by Dil\~ao \cite{Dil}, where casings are modeled by two separate masses $M$, and the connecting body is modeled by a massles spring of stiffness $k$ and damping $c$, see Fig.\ref{fig:TwoMassSys}. The rigid frame of the previous model (with mass $2M$ and detached from a fixed support) is recovered when $k\to\infty$. 
\begin{figure}[htbp]
\centering
\includegraphics[width=3.2in]{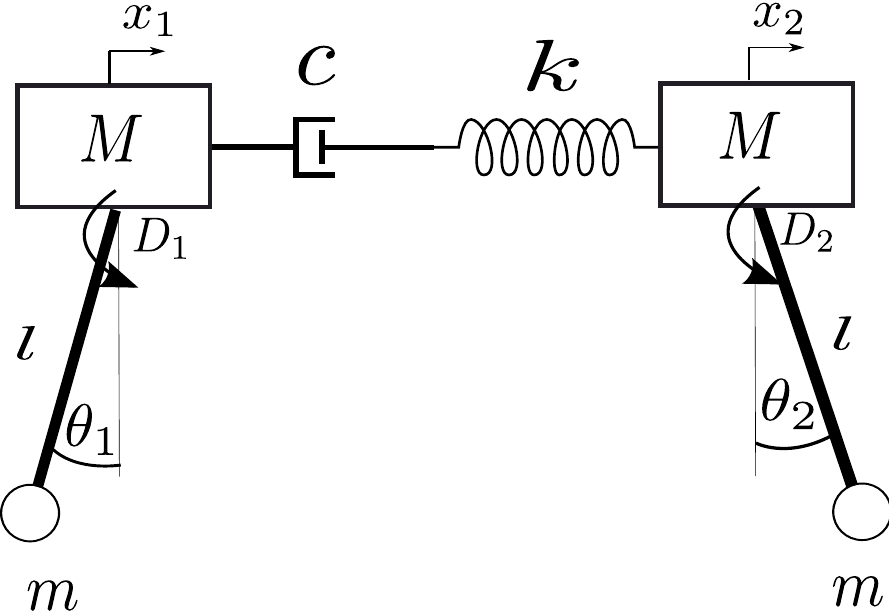}
\caption{\label{fig:TwoMassSys}Pendulums with clock casings connected by elastic frame.}
\end{figure}
Each mass carries its own pendulum of length $l$ and mass $m$, and, as before, the system is restricted to move horizontally. The positions of the casings are given by $x_{1}$, $x_{2}$ and the pendulum angles are 
$\theta_{1}$, $\theta_{2}$. Unlike Dil\~ao, who uses a piecewise-linear model for escapements, we continue to use the van der Pol terms $D(\t,\dot{\t}):=e(\g^{2}-\t^{2})\dot{\t}$ so that the Poincare method can be applied. 
Assuming small pendulum angles as before, the partially linearized equations of motion are
\begin{alignat*}{1}
\begin{cases}
ml^{2}\ddot{\theta}_{1}+mgl\,\theta_{1}+ml\,\ddot{x}_{1}=D_{1}\\
ml^{2}\ddot{\theta}_{2}+mgl\,\theta_{2}+ml\,\ddot{x}_{2}=D_{2}\\
(M+m)\ddot{x}_{1}+c\dot{x}_{1}-k\left(x_{2}-x_{1}\right)+ml\,\ddot{\theta}_{1}=ml\,\dot{\theta}_{1}^{2}\theta_{1}\\
(M+m)\ddot{x}_{2}+c\dot{x}_{2}+k\left(x_{2}-x_{1}\right)+ml\,\ddot{\theta}_{2}=ml\,\dot{\theta}_{2}^{2}\theta_{2}
\end{cases}
\end{alignat*}
with $D_{i}:=D(\t_{i},\dot{\t}_{i})$. Note, that Dil\~ao drops the non-linear terms on the right in the last two equations. The reason is the same as for his choice of a piecewise-linear escapement. Dil\~ao's approach requires evolution of the system to be linear as long as $\t_i$ stay below a critical angle, at which the escapment switches from boosting to damping. In contrast, the Poicare method only requires that the non-linear terms be analytic. Rescaling time as in Section \ref{S1}, and introducing dimensionless parameters $x_{1}=ly_{1}$ and $x_{2}=ly_{2}$ we obtain a dimensionless system 
(compare to \eqref{DimssSys}):
\begin{alignat*}{1}
\begin{cases}
\ddot{\theta}_{1}+\theta_{1}+\ddot{y}_{1}=F_{1}\\
\ddot{\theta}_{2}+\theta_{2}+\ddot{y}_{2}=F_{2}\\
\ddot{y}_{1}+\sigma\dot{y}_{1}-\varkappa\left(y_{2}-y_{1}\right)+\beta\ddot{\theta}_{1}=\dot{\theta}_{1}^{2}\theta_{1}\\
\ddot{y}_{2}+\sigma\dot{y}_{2}+\varkappa\left(y_{2}-y_{1}\right)+\beta\ddot{\theta}_{2}=\dot{\theta}_{2}^{2}\theta_{2}\,,
\end{cases}
\end{alignat*}
where $F_{i}=\frac{D_{i}}{mgl}=\frac{e}{mgl}(\g^{2}-\t_{i}^{2})\dot{\t}_{i}$, $\sigma=\frac{c}{\left(M+m\right)\sqrt{g/l}}$, $\varkappa=\frac{kl}{\left(M+m\right)g}$ and $\beta=\frac{m}{M+m}$. We choose our small parameter to be 
$\mu:=\frac{\beta}{1-\beta}=\frac{m}{M}$ and set $\frac{e}{mgl}=\mu a$ as before. Separating terms multiplied by $\mu$ and discarding higher order terms we transform the system into the form
\begin{equation}\label{2mass_sys}
\begin{cases}
\ddot{\theta}_{1}+\theta_{1}-\sigma\dot{y}_{1}+\varkappa\left(y_{2}-y_{1}\right)
=\mu\,\left(\,a(\g^2-\theta_{1}^2)\,\dot{\theta}_{1}-\F_{1}\right)\\
\ddot{\theta}_{2}+\theta_{2}-\sigma\dot{y}_{2}-\varkappa\left(y_{2}-y_{1}\right)
=\mu\,\left(\,a(\g^2-\theta_{2}^2)\,\dot{\theta}_{2}-\F_{2}\right)\\
\ddot{y}_{1}+\sigma\dot{y}_{1}-\varkappa\left(y_{2}-y_{1}\right)=\mu\F_{1}\\
\ddot{y}_{2}+\sigma\dot{y}_{2}+\varkappa\left(y_{2}-y_{1}\right)=\mu\F_{2}\,,
\end{cases}
\end{equation}
where 
\[
\F_{1}=-\sigma\dot{y}_{1}+\varkappa\left(y_{2}-y_{1}\right)+\theta_{1}\left(1+\dot{\theta}_{1}^{2}\right)\,\,\,
\text{and\,\,\,\,\ensuremath{\F_{2}}=\ensuremath{-\sigma\dot{y}_{2}-\varkappa\left(y_{2}-y_{1}\right)
+\theta_{2}\left(1+\dot{\theta}_{2}^{2}\right)}.}
\]
System \eqref{2mass_sys} can be rewritten in the first order form 
$\dot{\mathbf{\xi}}=\mathbf{A}\xi+\mu\mathbf{\Phi}(\mathbf{\xi})$, where\\
$\mathbf{\xi}=(\t_{1},\dot{\t_{1}},\t_{2},\dot{\t}_{2},y_{1},\dot{y}_{1},y_{2},\dot{y}_{2})^{T}$ and
\[
\mathbf{A}=\text{\ensuremath{\left[\begin{array}{cccccccc}
0 & 1 & 0 & 0 & 0 & 0 & 0 & 0\\
-1 & 0 & 0 & 0 & \varkappa & \sigma & -\varkappa & 0\\
0 & 0 & 0 & 1 & 0 & 0 & 0 & 0\\
0 & 0 & -1 & 0 & -\varkappa & 0 & \varkappa & \sigma\\
0 & 0 & 0 & 0 & 0 & 1 & 0 & 0\\
0 & 0 & 0 & 0 & -\varkappa & -\sigma & \varkappa & 0\\
0 & 0 & 0 & 0 & 0 & 0 & 0 & 1\\
0 & 0 & 0 & 0 & \varkappa & 0 & -\varkappa & -\sigma
\end{array}\right]}\qquad and \qquad}\mathbf{\Phi}=\left[\begin{array}{c}
0\\
a(\g^2-\theta_{1}^2)\,\dot{\theta}_{1}-\F_{1}\\
0\\
a(\g^2-\theta_{2}^2)\,\dot{\theta}_{2}-\F_{2}\\
0\\
\F_{1}\\
0\\
\F_{2}
\end{array}\right]
\]
The diagonalized matrix is $\mathbf{\Lambda}=\text{diag}\left(i,i,-i,-i,-\frac{1}{2}\left(\sigma-\sqrt{\sigma^{2}-8\varkappa}\right),-\frac{1}{2}\left(\sigma+\sqrt{\sigma^{2}-8\varkappa}\right),-\sigma,0\right)$, we omit the diagonalizing transformation. Note that one of the eigenvalues is zero, which indicates the presence of a rigid body mode. Indeed, one can see directly from Fig.\ref{fig:TwoMassSys} that the system as a whole is free to move rigidly. In the original model this was prevented by attaching the frame to a fixed support. But since this attachment, represented by $\O^2$ in \eqref{SigSys}, is a minor influence on synchronization we dropped it here for simplicity. As a compensation, we will only consider solutions that do not excite the rigid body mode. 

As the next theorem shows, qualitatively the behavior does not differ much from the case with the rigid frame. The in-phase regime eventually disappears, albeit at somewhat larger values of $\s$. The effect of the frame stiffness 
$\varkappa$ is similar in nature to that of the attachment stiffness $\Omega$ in Theorem \ref{Th3D}. One should also keep in mind that $M$ in the one mass frame model is twice its value in the two mass model here. This leads to an extra coefficient of $2$ in amplitude formulas and $\mu/2$ in place of $\mu$. With this in mind, the case of Theorem \ref{Th3D} for $\O=0$ is obtained in the limit $\varkappa\to\infty$, i.e. the limit of rigid frame.
\begin{theorem} For small $\mu>0$ and $a,\s>0$ system \eqref{2mass_sys} admits anti-phase synchronized periodic solutions only if $\gamma^{2}>\s_{an}$, where $\s_{an}:=\frac{\s}{a}\left(\left(2\varkappa-1\right)^{2}+\sigma^{2}\right)^{-1}$,  with amplitude $2\sqrt{\frac{\gamma^{2}-\s_{an}}{1+\s_{an}}}$ and period
$$
T(\mu)=2\pi-\frac{(2\varkappa-1)(1+\gamma^2)}{\left(2\varkappa-1\right)^{2}
+\frac{\sigma}a+\sigma^2}\,\frac{\mu}2+o(\mu)\,.
$$
It admits in-phase synchronized periodic solutions only if 
$\gamma^{2}>\s_{in}$, where $\s_{in}:=\frac{\s}{a}\left(1+\sigma^{2}\right)^{-1}$, with amplitude 
$2\sqrt{\frac{\gamma^{2}-\s_{in}}{1+\s_{in}}}$ and period 
$\ds{T(\mu)=2\pi-\frac{1+\gamma^2}{1+\frac{\sigma}a+\sigma^2}\,\frac{\mu}2+o(\mu)\,}$.
As before, $\g\neq0$ is the critical angle in the van der Pol escapement.
\end{theorem}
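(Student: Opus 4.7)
The plan is to follow the template of the proof of Theorem~\ref{Th3D} step by step, adapting the Poincar\'e averaging machinery (see Appendix) to the $8\times 8$ first-order system $\dot{\mathbf{\xi}}=\mathbf{A}\mathbf{\xi}+\mu\mathbf{\Phi}(\mathbf{\xi})$ into which \eqref{2mass_sys} has already been recast. The leading special group is taken to be the double pair of critical eigenvalues $\{i,i,-i,-i\}$, and the generating solutions we seek reduce to independent unit-frequency oscillations of the pendulums with motionless casings. I would parametrize them as $\pfi_k=\alpha_k\mathrm{e}^{it}$, $\psi_k=\overline{\alpha_k}\mathrm{e}^{-it}$ for $k=1,2$, write $\alpha_k=r_k\mathrm{e}^{i\phi_k}$, fix the time origin by $\phi_2=0$ and set $\phi_1=\phi$, and impose the equal-amplitude ansatz $r_1=r_2=r$. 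This should reduce \eqref{eq:main_conds} to a small system of equations for $r$ and $\phi$, exactly as in the one-mass case.

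The one genuinely new issue is the zero eigenvalue in $\mathbf{\Lambda}$ corresponding to the rigid-body translation of the whole assembly. The Poincar\'e theorem requires non-special eigenvalues to have non-zero real parts, so the analysis must first be restricted to the codimension-one invariant subspace on which this mode is not excited; this is precisely the restriction the theorem statement announces, and it amounts physically to fixing the center of mass. On this subspace the remaining non-special eigenvalues are $-\s$ together with the pair $-\tfrac{1}{2}(\s\pm\sqrt{\s^{2}-8\varkappa}\,)$, all with strictly negative real part when $\s,\varkappa>0$, so the non-critical hypothesis \eqref{eq:thm_eq3} is met. Verifying that the Poincar\'e averaging remains valid after this projection is the step I expect to be the main obstacle; once it is dispatched the rest is a mechanical adaptation of Theorem~\ref{Th3D}.

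With that in place, the third equation of the reduced system should take the form $(\s+i\omega_*)(1-\mathrm{e}^{-2i\phi})(1+r^{2})=0$ with $\omega_*\neq 0$, forcing $\phi=0$ (in-phase) or $\phi=\pi$ (anti-phase). The remaining equation balances the van der Pol drive $a(\g^{2}-r^{2})$ against the effective damping transmitted through the casings, whose admittance at driving frequency $1$ depends on which casing mode is excited. For $\phi=0$ the casings move in lock-step and the coupling spring $\varkappa$ is not engaged, so the driven-oscillator squared-modulus admittance is $1/(1+\s^{2})$, giving the effective coefficient $\s_{in}$. For $\phi=\pi$ the casings move oppositely and the spring contributes an effective stiffness $2\varkappa$, yielding $1/((2\varkappa-1)^{2}+\s^{2})$ and hence $\s_{an}$. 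Solving the balance then produces the common expression $r^{2}=(\g^{2}-\s_{*})/(1+\s_{*})$, so a real positive amplitude requires $\g^{2}>\s_{*}$ as claimed. The period corrections follow by substituting the computed $\alpha_s$ into \eqref{eq:per_cor}; the numerators $(2\varkappa-1)(1+\g^{2})$ and $(1+\g^{2})$ come out of the imaginary parts of the respective modal admittances, and the overall factor of $\mu/2$ is a bookkeeping consequence of each casing carrying mass $M$ here rather than the lumped $2M$ of the one-mass model.
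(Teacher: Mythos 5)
Your overall route matches the paper's: recast \eqref{2mass_sys} as an $8\times8$ quasi-linear system, take the doubled pair $\pm i$ as the relevant critical eigenvalues, impose $r_1=r_2=r$, $\phi_2=0$, and reduce \eqref{eq:main_conds} to a balance between the van der Pol drive and the modal admittance of the casings; your admittance computation correctly distinguishes the in-phase mode (spring not engaged, $1/(1+\s^2)$) from the anti-phase mode (effective stiffness $2\varkappa$, $1/((2\varkappa-1)^2+\s^2)$) and reproduces $\s_{in}$, $\s_{an}$ and the amplitude formulas. The genuine gap is exactly the step you flagged: your treatment of the zero eigenvalue does not work as proposed. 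The codimension-one subspace on which the rigid-body mode is unexcited is \emph{not} invariant under the nonlinear flow for $\mu>0$: the left eigenvector of $\mathbf{A}$ for $\lambda=0$ is proportional to $(0,0,0,0,\s,1,\s,1)$, so the corresponding diagonal coordinate $\eta_0$ evolves by $\dot\eta_0=\tfrac{\mu}{2\s}(\F_1+\F_2)$ with $\F_1+\F_2=-\s(\dot y_1+\dot y_2)+\sum_i\t_i(1+\dot{\t}_i^2)$, which does not vanish when $\eta_0=0$. So you cannot project the mode away before invoking the Poincar\'e theorem. You also misplace it in the classification: $0=i\cdot 0\cdot\omega$ is of the form $in_s\omega$ with integer $n_s=0$, so by the Appendix's definitions it belongs to the \emph{leading special group}, not to a non-special group (and eigenvalues with strictly negative real parts, such as $-\s$ and $-\tfrac12(\s\pm\sqrt{\s^2-8\varkappa})$, are simply non-critical and impose no condition; \eqref{eq:thm_eq3} only concerns purely imaginary non-special eigenvalues).

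The correct resolution, which is the paper's, is simpler than what you anticipated: include $\lambda=0$ in the leading special group with $n_s=0$ and amplitude $\alpha_s=0$ (this is the precise meaning of ``not exciting the rigid-body mode''), and observe that the extra consistency equation it contributes to \eqref{eq:main_conds}, namely $\alpha_k n_k P_s=0$, i.e.\ $\langle\F_1+\F_2\rangle=0$ along the generating solution, is an identity: on the generating solution $y_1=y_2=0$, so $\F_1+\F_2=\sum_i\t_i(1+\dot{\t}_i^2)$ contains only first and third harmonics of $\mathrm{e}^{it}$ and has zero mean. With that observation your remaining computations go through, and the period corrections, including the factor $\mu/2$, should be allowed to fall out of \eqref{eq:per_cor} directly rather than be inserted by a mass-bookkeeping argument (note that the denominator also changes from $2\s/a$ in Theorem \ref{Th3D} to $\s/a$ here, which a pure rescaling of $M$ would not explain by itself).
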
 
\noindent The proof is similar to that of Theorems \ref{Th3Dsmall} and \ref{Th3D} and we omit it. The only structural difference is that the $0$ eigenvalue has to be included into the special leading group, but the corresponding equation from \eqref{eq:main_conds} is vacuously satisfied. Note the correspondence between $\s_{in}$ and $\widetilde{\s}$ from Theorem \ref{Th3D}.  Note also that $\s_{an}<\s_{in}$ for large $\varkappa$, meaning that the range of existence of the anti-phase regime is wider than of the in-phase one, with disparity increasing as $\varkappa\to\infty$. Unfortunatley, we were unable to simplify the stability conditions to a tractable form in this case. But, by a continuity argument from Theorem \ref{Th3D}, at least for large $\varkappa$ the in-phase regime should destabilize at 
smaller values of $\s$ than the anti-phase regime.

\section{Conclusions}

We used the Poincar\'e method to study models of two connected clocks synchronizing their oscillations, a phenomenon originally observed by Huygens. The oscillation angles are assumed to be small so that the pendulums are modeled by harmonic oscillators, and the mass ratio $\mu$ of the pendulum bobs to their casings is taken as the small parameter of the Poincar\'e method. Clock escapements are modeled by the van der Pol terms. The results are most complete for the three degrees of freedom model, where the connecting frame and the clock casings are consolidated into a single mass. The partial results on the four degrees of freedom model with casings represented by separate masses do not alter 
the qualitative picture.

The anti-phase synchronization regime is dominant in the sense that it is universally stable under variation of the system parameters. In this regime (to the first order in $\mu$) the frame is motionless, and the stable amplitudes and periods of the pendulums are the same as for the decoupled van der Pol oscillators. The in-phase synchronization regime may exist and be stable, exist and be unstable, or not exist at all depending on parameter values. As the damping in the frame, or more precisely the dimensionless parameter $\s=\frac{c}{(M+2m)\sqrt{g/l}}$, is increased the in-phase stable amplitude and period are decreasing until this regime first destabilizes and then disappears. Analytic conditions for existence and stability of synchronization regimes, and analytic expressions for their stable amplitudes and period corrections are derived based on the Poincar\'e theorem. In particular, they provide explicit relations between escapement's strength and critical angle, and damping in the frame, that distinguish experimental setups supporting and not supporting the in-phase synchronization.

Our results vindicate the intuitive explanation of Huygens' sympathy due to Korteweg, that the in-phase linear mode damps out faster than the anti-phase mode, because it requires more sizable motions of the frame, unless the damping is kept small on purpose. Interaction between the two modes explains the beats observed by Ellicott and others \cite{Kort}. Beats are most likely to be observed when the in-phase regime is present but unstable.

The following remarks are based on numerical simulations and are therefore somewhat speculative. When both regimes coexist, even if the in-phase one is unstable, the system may undergo a long period of beats before settling asymptotically. This is most likely to happen when the initial values are near the boundary between the basins of attraction of the regimes. In the course of beating the pendulum angles may fall significantly below their initial and asymptotic values. If escapement models with engagement threshold are used instead of the van der Pol terms the beats may cause one of the clocks to stop in finite time. When the damping is particularly heavy the system goes through a linear stage of anti-phase oscillations with asymptotic amplitude dependent on the initial values. If the engagement threshold is absent however, the clocks eventually synchronize in anti-phase with stable amplitudes. Thus, despite the limitations of the van der Pol model for escapements, we can account for all types of behavior observed in experiments 
\cite{Ben,Cz2} and, to some extent, predict quantitatively when they occur.

\section*{Appendix: Poincar\'e method}

\setcounter{equation}{0} \global\long\global\long\def\theequation{A.\arabic{equation}}

For convenience of the reader we briefly describe a method of studying periodic solutions to sytems of quasi-linear equations that we use in this paper. More detailed accounts can be found in \cite{And}, Ch.IX and \cite{Bl2}, Ch. 10, proofs are given in \cite{Bl1}, Ch.V.

Consider a first order autonomous quasi-linear system 
$\dot{\mathbf{x}}=\mathbf{A}\mathbf{x}+\mu\mathbf{\Phi}(\mathbf{x})$ and assume that the $l\times l$ matrix
$\mathbf{A}$ can be diagonalized as $\mathbf{A}=\mathbf{V}\mathbf{\Lambda}\mathbf{V}^{-1}$ by a nonsingular linear transformation $\mathbf{V}$. Set $\mathbf{F}(\xi)=\mathbf{V}^{-1}\mathbf{\Phi}(\mathbf{V}\xi)$, then a change of variables 
$\mathbf{x}=\mathbf{V}\mathbf{y}$ transforms the system into the form 
$\dot{\mathbf{y}}=\mathbf{\Lambda}\mathbf{y}+\mu\mathbf{F}(\mathbf{y})$. The linear system 
$\dot{\mathbf{x}}=\mathbf{A}\mathbf{x}$, as well as $\dot{\mathbf{y}}=\mathbf{\Lambda}\mathbf{y}$, is called the {\em generating system}. We are interested in the periodic solutions to the original system that converge to periodic solutions to the generating system when $\mu\to0$. The latter exist only if some eigenvalues of $\mathbf{A}$, the diagonal elements of $\mathbf{\Lambda}$, are purely imaginary, and they are stable only if $\mathbf{A}$ has no eigenvalues with positive real parts. From now on we assume that all eigenvalues have non-positive real parts, and call purely imaginary ones {\em critical}. In synchronization problems one is typically interested in particular solutions to the generating system, which correspond to synchronous motion of decoupled oscillators. 

Let $\l=i\omega$ be a critical eigenvalue. Following Blekhman \cite{Bl1,Bl2}, we collect all eigenvalues of the form 
$\l_s=in_s\omega$ with integer $n_s$ into the {\em leading special group}, and assume, without loss of generality, that they occur for $s=1,\dots,k$. Then the most general periodic solution of the generating system with period 
$T=\frac{2\pi}{\omega}$ is given by 
\begin{equation}\label{alphas}
y_{s}^{\circ}=\begin{cases}
\alpha_{s}\mathrm{e}^{in_{s}\omega t}, & s=1,\dots,k\\
0, & s=k+1,\dots,l\,,\end{cases}
\end{equation}
where $\alpha_{s}$ are complex amplitudes. According to the usual method of small parameter, one now looks for solutions 
$\mathbf{y}=\mathbf{y}(t,\mu,\alpha,\beta)$, with $y_s=y_{s}^{\circ}+\beta_s$ for $t=0$, to the original system as multivariable power series in $\mu,\alpha,\beta$. If such solutions are periodic the period will be of the form $T(\mu)=T(1-\delta(\mu))$, where $\delta(\mu)=\delta_1\mu+o(\mu)$, since $T(0)=T$ by assumption. 

Existence of periodic solutions to the original system now reduces to whether one can find power series that also satisfy
\begin{equation}
\mathbf{y}(T(\mu),\mu,\alpha,\beta)-\mathbf{y}(0,\mu,\alpha,\beta)=0.\label{Period}
\end{equation}
Indeed, by autonomy this implies that $\mathbf{y}$ is $T(\mu)$ periodic. This is a system of $l$ analytic equations with $l+1$ unknowns $\mu,\beta_1,\dots,\beta_l$, so one extra condition on $\beta$'s can be imposed. The system is solvable only if certain consistency constraints are satisfied. It turns out that $\mu$ factors out of the first $k$ equations and by setting their free terms to $0$ one obtains consistency constraints on $\delta_1$ and the amplitudes $\alpha_s$. 
Eliminating $\delta_1$ we finally end up with $k-1$ equations \eqref{eq:main_conds} for $k$ amplitudes. Again, one extra condition can be imposed on $\alpha$'s, e.g. $\alpha_{k-1}=\alpha_k$. This freedom reflects the possibility of shifting time in solutions to autonomous sytems. Thus, although the linear system has periodic solutions for arbitrary values of 
$\alpha_s$, non-linear periodic solutions only exist for some special collections of their values. Once the values 
of $\alpha_s$ are found from \eqref{eq:main_conds} the first order period correction $\delta_1$ can also be determined, see \eqref{eq:per_cor}.

If one is interested, as we are, in real-valued systems, then $\mathbf{A}$ is real and its eigenvalues come in complex conjugate pairs. The leading special group will contain both elements of a pair or none, so $k=2m$ and without loss of generality,
$\l_{m+s}=\overline{\l_{s}}$ for $s=1,\dots,m$. Generating periodic solutions will also be real if and only if 
$\alpha_{m+s}=\overline{\alpha_{s}}$.

Stability of a periodic trajectory can be investigated by looking at the solution to the matrix equation 
$\dot{\mathbf{Z}}=(\mathbf{A}+\mu\nabla\mathbf{\Phi}(x))\mathbf{Z}\,$ with $\,\mathbf{Z}(0)=\mathbf{I}$, where $\nabla\mathbf{\Phi}(x)$ is the derivative of the non-linear term computed along the trajectory. This is the system of variations. Stability is determined by eigenvalues $\rho(\mu)$ of 
$\mathbf{Z}(T(\mu))$, e.g. it suffices that they were less than one by absolute value. For $\mu=0$ we obviously have $\mathbf{Z}(t)=\mathrm{e}^{t\mathbf{A}}$, 
so $\rho_s(0)=\mathrm{e}^{\l_sT}$ and one can look for these eigenvalues in the form 
$\rho_s(\mu)=\mathrm{e}^{\l_sT}(1+\varkappa\mu+o(\mu))$. For non-critical eigenvalues $|\rho_s(0)|<1$ and we automatically have 
$|\rho_s(\mu)|<1$ for small $\mu$ by continuity. For critical eigenvalues we will have $|\rho_s(\mu)|<1$ for small 
$\mu>0$ as long as $\text{Re}(\varkappa)<0$. 

For the purposes of stating the stability conditions it is convenient to distinguish further groups among the critical eigenvalues. Two eigenvalues belong to the same group if they are integer multiples of each other, or if they differ by $in\omega$ with $n$ an integer. The point is that stability conditions decouple into separate conditions for each group. Recall that $\l_s=in_s\omega$ with integer $n_s$ form the leading special group. The {\em secondary special group} is $\l_s=i(n_s+\frac12)\omega$, and the remaining 
{\em non-special groups} have eigenvalues of the form $\l_s=i\nu_{s}^{(r)}$, where $r$ enumerates non-special groups.
After heavy duty computations the conditions we described can be brought into a convenient form below, which is due to Blekhman \cite{Bl2}. 
\bigskip
\begin{theorem*}[\textbf{Poincar\'e}]
Let $\dot{\mathbf{y}}=\mathbf{\Lambda}\mathbf{y}+\mu\mathbf{F}(\mathbf{y})$ be an autonomous quasi-linear system with diagonal $\mathbf{\Lambda}$. Assume that all eigenvalues of $\Lambda$ have non-positive real parts, and there is at least one purely imaginary eigenvalue $\l=i\omega$. For convenience, suppose that all eigenvalues of the form $\l_s=in_s\omega$ with integer $n_s$ are listed as first $k$. Then periodic solutions to the system becoming at $\mu=0$ periodic solutions to the generating system $\dot{\mathbf{y}}=\mathbf{\Lambda}\mathbf{y}$ with period $T=\frac{2\pi}{\omega}$ can only exist for
such values of amplitudes $\alpha_{1},...,\,\alpha_{k-2},\,\alpha_{k-1}=\alpha_{k}$,
that satisfy equations 
\begin{equation}
Q_{s}\left(\alpha_{1},...,\,\alpha_{k}\right)=\alpha_{k}n_{k}P_{s}-\alpha_{s}n_{s}P_{k}=0\,,\,\,\,\,\, s=1,...,k-1\,,
\label{eq:main_conds}
\end{equation}
where 
\begin{equation}
P_{s}\left(\alpha_{1},...,\,\alpha_{k}\right):=\left\langle F_{s}\mathrm{e}^{-in_{s}\omega t}\right\rangle:=\frac{1}{T}\int_{0}^{T}F_{s}\left(\alpha_{1}\mathrm{e}^{in_{1}\omega t},...,\,\alpha_{k}\mathrm{e}^{in_{k}\omega t},0,...,0\right)\mathrm{e}^{-in_{s}\omega t}dt\,,
\,\,\,\,\,s=1,...,k\,.
\end{equation}
The first order period correction for these solutions is 
\begin{equation}\label{eq:per_cor}
\delta_1=\frac{P_{k}\left(\alpha_{1}^{*},...,\alpha_{k}^{*}\right)}{i\alpha_{k}^{*}n_{k}T}\,,
\end{equation}
where $\alpha_{s}^{*}$ are the values found from \eqref{eq:main_conds}.

These solutions will indeed exist and be stable (asymptotically orbitally stable) for small positive $\mu$ 
if all the roots $\varkappa$ of the following algebraic equations have negative real parts:
\begin{equation}
\left|\frac{\partial Q_{s}}{\partial\alpha_{j}}-\alpha_{k}n_{k}\delta_{sj}\varkappa\right|=0,\,\,\,\,\,\,\,
s,j=1,...,k-1\,;
\label{eq:thm_eq1}
\end{equation}
\begin{equation}
\left|\left\langle \frac{\partial F_{s}}{\partial y_{j}}\mathrm{e}^{i\left(n_{j}-n_{s}\right)\omega t}\right\rangle -\delta_{sj}\left(\frac{\left(n_{s}+\frac{1}{2}\right)P_{k}}{n_{s}\alpha_{k}}+\varkappa\right)\right|=0,\,\,\,\,\,\,\, 
s,j\text{ in the secondary special group;}
\label{eq:thm_eq2}
\end{equation}
\begin{alignat}{1}
\left|\left\langle \frac{\partial F_{s}}{\partial y_{j}}\mathrm{e}^{i\left(\nu_{j}^{(r)}-\nu_{s}^{(r)}\right)\omega t}\right\rangle -\delta_{sj}\left(\frac{\nu_{s}^{(r)}P_{k}}{n_{s}\omega\alpha_{k}}+\varkappa\right)\right| & =0,
\,\,\,\,\,\,\,s,j\text{ in the $r$th non-special group.}
\label{eq:thm_eq3}
\end{alignat}
If the real part of at least one root of these equations is positive then the corresponding solution is unstable.
\end{theorem*}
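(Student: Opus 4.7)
The plan is to apply the Poincar\'e theorem from the Appendix to the first-order form of \eqref{2mass_sys} exactly as was done for Theorems \ref{Th3Dsmall} and \ref{Th3D}, exploiting the fact that the non-linear right-hand side has the same structural form as before. First I would fix the generating system by setting $\mu=0$. The eigenvalues of $\mathbf{A}$ split into: the double pair $\pm i$ coming from the decoupled pendulums; the complex pair $-\frac12(\sigma\mp\sqrt{\sigma^2-8\varkappa})$ coming from the anti-symmetric frame coordinate $y_2-y_1$ (which feels the full relative stiffness $2\varkappa$); the real eigenvalue $-\sigma$ from the symmetric damped coordinate $y_1+y_2$; and the zero eigenvalue, which corresponds to rigid translation of the whole system. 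The leading special group must contain $\pm i$ (multiplicity two); since $0=0\cdot\omega$ for any $\omega$, the zero eigenvalue must also be placed in this group, giving $k=5$. The remaining three eigenvalues all have strictly negative real parts (for $a,\sigma>0$), so they are non-critical and stability condition \eqref{eq:thm_eq3} is automatic.

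Next I would impose the usual reality and normalization conditions $\alpha_{k+2}=\overline{\alpha_k}$, set $r_1=r_2=r$, $\phi_2=0$, $\phi_1=\phi$, and compute the averages $P_s$ entering \eqref{eq:main_conds} for the four pendulum indices; the amplitude equation corresponding to the zero eigenvalue is vacuously satisfied, since the rigid-body component of $\mathbf{F}(\xi)$ averages to zero (the $F_i$ do not excite rigid translation, as the author notes). After performing the Fourier averages of the van der Pol terms and the non-linear coupling $\dot\theta_i^2\theta_i$ against $e^{\pm it}$, the resulting system of four complex equations reduces, by the same manipulations as in Theorem \ref{Th3D}, to an equation forcing $e^{2i\phi}=1$ (so $\phi\in\{0,\pi\}$) and a single complex equation for $r$. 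The key observation is that the response of the frame to forcing at frequency $\omega=1$ differs between the in-phase and anti-phase cases: in the in-phase case $\phi=0$ both masses are pushed identically, $y_2-y_1\equiv 0$ at leading order, and the effective frame equation is $\ddot y+\sigma\dot y=\text{forcing}$, producing the factor $(1+\sigma^2)^{-1}$ in $\s_{in}$; in the anti-phase case $\phi=\pi$ the relative coordinate is excited with effective stiffness $2\varkappa$, producing the factor $((2\varkappa-1)^2+\sigma^2)^{-1}$ in $\s_{an}$. Solving $r^2=\frac{\gamma^2-\tilde\sigma}{1+\tilde\sigma}$ with the appropriate $\tilde\sigma$ and doubling gives the stated amplitudes $2r$, and existence requires $\gamma^2>\tilde\sigma$.

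For the periods, I would apply \eqref{eq:per_cor} with $\alpha_k^*$ equal to the in-phase or anti-phase amplitude and extract $\delta_1$ from the imaginary part of $P_k$ evaluated at the critical point. The extra factor of $\tfrac{\mu}{2}$ relative to Theorem \ref{Th3D} is exactly the rescaling $M\mapsto M/2$ mentioned in the remarks preceding the theorem, and the denominators $(2\varkappa-1)^2+\sigma/a+\sigma^2$ and $1+\sigma/a+\sigma^2$ arise as the real part of the same denominators that produced $\s_{an}$ and $\s_{in}$, now with one power of $a^{-1}$ absorbed into the $\sigma/a$ term because $\delta_1$ involves the ratio $P_k/\alpha_k$.

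The main obstacle is bookkeeping rather than conceptual: diagonalizing the $8\times 8$ matrix $\mathbf{A}$, computing $\mathbf{V}^{-1}$, and expressing $\mathbf{F}=\mathbf{V}^{-1}\mathbf{\Phi}(\mathbf{V}\xi)$ in closed form is tedious, and one must verify carefully that the zero mode truly decouples from the amplitude system (so that no spurious constraint is imposed). Once that decoupling is established, the remaining algebra parallels the proof of Theorem \ref{Th3D} term-by-term, with $\Omega^2$ replaced by $0$ or $2\varkappa$ depending on whether the pendulums act on the symmetric or antisymmetric frame coordinate. The stability computation, by contrast, would require working out \eqref{eq:thm_eq1} for a $4\times 4$ Jacobian whose characteristic polynomial does not factor transparently; consistent with the paper's remark, I would not attempt to state stability conditions explicitly and would instead invoke continuity in $\varkappa$ from Theorem \ref{Th3D} to justify that the in-phase regime destabilizes before the anti-phase regime at least for large $\varkappa$.
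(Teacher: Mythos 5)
Your proposal does not prove the statement in question. The statement is the Poincar\'e theorem itself --- the general existence, period-correction and stability criterion \eqref{eq:main_conds}--\eqref{eq:thm_eq3} for periodic solutions of an arbitrary autonomous quasi-linear system $\dot{\mathbf{y}}=\mathbf{\Lambda}\mathbf{y}+\mu\mathbf{F}(\mathbf{y})$. What you have written is instead an outline of the proof of the unnumbered theorem in Section \ref{S4} on the two-mass frame model: you identify the spectrum of the $8\times8$ matrix, assign the leading special group, and describe how to reduce \eqref{eq:main_conds} to equations for the half-amplitude and the phase. That argument \emph{invokes} the Poincar\'e theorem as a black box, so as a proof of the assigned statement it is circular: none of the actual content of the theorem --- why the amplitude equations $Q_s=0$ are necessary conditions, where the period correction \eqref{eq:per_cor} comes from, or why the root conditions \eqref{eq:thm_eq1}--\eqref{eq:thm_eq3} imply asymptotic orbital stability --- is established or even addressed.

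A proof of the actual statement would have to carry out the program sketched in the Appendix prose: seek solutions $\mathbf{y}(t,\mu,\alpha,\beta)$ as convergent power series, impose the periodicity condition \eqref{Period}, observe that $\mu$ factors out of the first $k$ components so that their free terms yield consistency constraints on $\delta_1$ and $\alpha_1,\dots,\alpha_k$ (eliminating $\delta_1$ gives \eqref{eq:main_conds} and \eqref{eq:per_cor}), solve the remaining equations for the $\beta$'s by the implicit function theorem, and then analyze the monodromy matrix of the system of variations $\dot{\mathbf{Z}}=(\mathbf{A}+\mu\nabla\mathbf{\Phi})\mathbf{Z}$, expanding its characteristic multipliers as $\rho_s(\mu)=\mathrm{e}^{\lambda_s T}(1+\varkappa\mu+o(\mu))$ and showing that the first-order coefficients $\varkappa$ are exactly the roots of \eqref{eq:thm_eq1}--\eqref{eq:thm_eq3} for the respective groups of critical eigenvalues. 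The paper itself does not do this either --- it explicitly defers the proof to Blekhman --- so both you and the authors treat this result as imported machinery; but as a self-contained attempt at the stated theorem your proposal is missing essentially all of the required content, and what it does contain (a reasonable sketch of the two-mass application, consistent with the proof the paper says it omits) belongs to a different theorem.
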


\bibliographystyle{unsrt}
\bibliography{huygensref}

\pagebreak
\thispagestyle{empty}

\end{document}